\pgfplotsset{compat=newest}
\newcommand{\Et}{\color{black}}
\newtheorem{proposition}{Proposition}
\newtheorem{corollary}{Corollary}
\newtheorem{remark}{Remark}
\begin{document}

\preprint{APS/123-QED}

\title{The diverse and fair structures of growth with scale-free saturations}

\author{Alain Govaert}
\email{alain.govaert@it.uu.se}
 \affiliation{Department of Information Technology, Uppsala University.}

\author{Emma Tegling}
\email{emma.tegling@control.lth.se}
\affiliation{Department of Automatic Control, Lund University.}

\author{Andr\'{e} Teixeira}
\email{andre.teixeira@it.uu.se}
 \affiliation{Department of Information Technology, Uppsala University.}


\date{\today}

\begin{abstract}
Real-world growth processes and scalings have been broadly categorized into three growth regimes with distinctly different properties and driving forces. The first two are characterized by a positive 
and constant feedback between growth and growth rates which in the context of networks lead to scale-free or single-scale networks. The third, sublinear, regime is characteristic of biological scaling processes and those that that are driven by optimization and efficiency. These systems are characterized by a negative feedback in growth rates and as such naturally exhibit saturations {\Et -- areas where growth ceases \textcolor{black}{from a lack of resources}}. \textcolor{black}{Motivated by this observation, we propose and analyze a simple network growth process that is analogous to this sublinear regime and characterize how its scale-free saturations impact the diversity and fairness of its structural properties and {\Et give rise to } scaling relations observed throughout complex systems and science.}
\end{abstract}

\maketitle

\section{Introduction}
Many natural and man-made growth processes exhibit local saturations, {\Et that is, depletion of resources,} under sustained growth. 
Oftentimes these saturations have detrimental effects on the system, as the competition for resources increases.
The finite number of local resources effectively impose constraints on local growth and eventually forces the global system to expand or change in ways to sustain further growth. 
The rate at which new resources are, or have to be, added to a system with size $N$ largely determines the ``pace of life'' which expresses itself by a power law scaling of the form $N^{\beta-1}$~\cite{bettencourt2007growth}.
Bettencourt and colleagues identify three universal categories that can be briefly outlined as follows.
The linear regime ($\beta=1$) has a constant growth rate resulting in exponential growth associated to the individual level.
The superlinear regime $(\beta>1)$ exhibits increasing growth rates.
This is observed in cities, whose continued growth requires a constant increase in the creation of wealth, resources, and information through innovations, signifying the fast pace of life in large cities.
In contrast, the sublinear regime ($\beta<1$) exhibits decreasing growth rates and thus eventually saturates.
In cities this is found in the infrastructural economies of scale that are focused on \emph{efficiency} and \emph{optimization} rather than innovation.
This is also the regime of biological systems whose pace of life decreases in growth, as exemplified by the allometric quarter power law scalings that predict metabolic rates decrease with body size~\cite{west1997general,brown2002fractal}.\\

We argue that these three general relations between growth, scaling, and resource availability have natural analogies to how growing networks organize themselves.
Preferential attachment, or cumulative advantage, is analogous to superlinear scaling in which there is a positive feedback between growth and growth rates. This famoulsy leads to the rich-get-richer (or Matthew effect) and scale-free networks with power law degree distributions with superlinear exponents that mostly lie between two and three~\cite{newman2005power}.
When preferential attachment is tempered, broad-scale degree distributions emerge that exhibit power-law behaviors with an exponential cut-off~\cite{amaral2000classes,doi:10.1073/pnas.0307625100,doi:10.1073/pnas.0606779104}. 
This exponential decay in ``single scale'' networks is analogous to the linear scaling regime of which uniform attachment is a prime example.
This finally leaves the sublinear regime whose growth is hampered by saturations. 
In networks, saturations occur if sets of vertices cannot receive new connections. 
Depending on how fast the saturations occur this remains a local phenomenon or becomes prevalent throughout the network.
In any case, local saturations tend to increase the \emph{distance} between vertices, thus increasing the diameter and limiting the global connectivity of the network.
This hurts the \emph{efficiency} or \emph{optimality} of information flow and resource exchange over the network, which is are the main driving forces of growth processes in the sublinear regime.  

Thus, it is natural for a network growth process in this regime to avoid local saturations by distributing its edges over the vertices not by how much they already have, but by how much they are \emph{missing}. 
Indeed, this is analogous to the negative feedback between growth and growth rates that naturally arise from the usage of local resources.
These growth processes thus represent the network analogy of how infrastructural economies of scale in cities~\cite{doi:10.1126/science.1235823,bettencourt2007growth,doi:10.1073/pnas.2214254120,5255fb9c-f048-3309-98fd-bd044c76783b,gastner2006optimal} and biological systems~\cite{brown2002fractal,west1997general} tend to grow. 
Yet, they received significantly less scientific attention than the other two scaling regimes and therefore relatively little is known about them.
\textcolor{black}{Statistical analyses however support the idea that real-world \emph{biological} networks may grow in a different scaling regime and have different structural properties than, for example, social and technological networks~\cite{broido2019scale}.
Recent research on \emph{evolving} networks have indeed recognized this, stressing the importance of physicality and the constraints that it imposes on network structure~\cite{posfai2024impact}. 
One of these constraints are saturations, typical to sublinear growth, that can result in networks to not only evolve but also \emph{grow} in a constrained manner.}
The above empirical findings then suggest that scalings may lie hidden in the global structures \textcolor{black}{induced} by local sublinear growth in networks. 
\textcolor{black}{As in scale-free networks, an idealized model can then be used to uncover these scalings from which the basic principles, and mechanisms that underlie the diverse structure of networks can be identified and built upon. }
Furthermore, by \emph{reversing} the network analogy we can also learn about the possible effects that local sublinear growth and scalings may have on the global scale of complex systems \textcolor{black}{across different research domains}.

To investigate this, we focus on two basic but important constraints that naturally induce saturations in growing networks: the minimum vertex degree $d$, and maximum vertex degree $w$.
Both have important implications for networks and the processes and algorithms over them. 
For example, degree bounds commonly influence various spectral properties of networks relevant to dynamical systems that evolve over them~\cite{chung1997spectral},
Degree bounds are also crucial in the feasibility of property testing algorithms~\cite{goldreich1997property} and preference identification over networks~\cite{de2018identifying}. 
In games on networks, the critical contagion threshold is known to be at least $1/w$~\cite[Corollary 3]{06c81afd-70f8-32ca-9705-30ea838a0775,jackson2008social}, indicating that a bound on the maximum degree can aid in the effectiveness of mechanism design. 
Lower bounds on the vertex degree are desirable for the resilience of distributed control~\cite{PIRANI2023111264}, the speed of log linear learning~\cite[Proposition 4]{doi:10.1073/pnas.1100973108},
and the structural (target) controllability of networks~\cite{9273235,liu2011controllability} determined by the size of its maximum matching, often studied in random graphs~\cite{10.1007/978-3-0348-7915-6_11}.

The above illustrates the importance of degree-related properties of networks and motivates their usage as a natural starting point to investigate how they may affect structural and scaling relations in sublinearly growing networks.
 Our contributions in this direction can be summarized as follows:

\begin{itemize}
    \item We propose an intuitive network growth process rooted in empirical sublinear scalings that incorporates a minimum of parameters, specific details, and is complementary to existing models.
    \item Our analysis shows that local power law scalings in the saturation and volume of sets of vertices quickly emerges with a fractional exponent that is easily interpretable as the ratio of resource consumption to production during growth. 
    \item We characterize how the local sublinear scalings express themselves in the emergent global structure and bound finite size effects with an almost sure convergence rate.
    \item We explicitly link the sublinear exponents to the frequently reported Taylor's law or {\Et fluctuation scaling}
    with a constant exponent two and a variable slope that is interpretable by the structural fairness of the network.
    \item We discuss the implications for other important network properties and the possible relevance of the idealized process to other sublinear scalings.
\end{itemize}

\subsection{Outline}
In Section \ref{sec: model} we define the growth process with scale-free local saturations and relate it to sublinear growth processes.
In Section \ref{sec: dist} expressions are derived for the asymptotic structural quantities and convergence rates are provided for individual realizations. We show how the local scale-free saturations express themselves through various properties and scalings at the global scale.
Section~\ref{sec: con} discusses the results and concludes the paper. 
Longer proofs of formal statements are found in the appendices.

\section{The growth process with scale-free saturations}\label{sec: model}

\begin{table*}[]
\begin{tabular}{cccccc}
\textbf{Regime} & \textbf{Driving force} & \textbf{Local principle} & \textbf{Proportionality} & \textbf{Growth rate scaling} & \textbf{Idealized structure} \\ \hline
\rowcolor[HTML]{EFEFEF} 
superlinear            & innovation             & cumulative advantage                        & volume                   & $\lambda^{1/2}$                 & scale-free                   \\
linear                 & individual             & neutral                          & size                     & -                               & geometric                    \\
\rowcolor[HTML]{EFEFEF} 
sublinear              & efficiency             & fairness                         & capacity                 & $\lambda^{-\frac{d}{w-2d}}$     & diverse                     
\end{tabular}
\caption{\textcolor{black}{Overview of the features of the three universal growth and scaling regimes and their analogous idealized network growth processes. The growth rate scaling for the superlinear regime corresponds to linear preferential attachment,} \Et{while for the sublinear regime we state the growth rate in terms of capacity that depends on the minimum degree $d$ and maximum degree $w$. \textcolor{black}{Depending on the ratio of these parameters a variety of structures emerge characterized by distinct  structures and correlations associated to constrained growth.}}}
\label{table: overview}
\end{table*}

We consider a random growing graph process in which a single vertex is added at each discrete time step that subsequently connects to $d$ existing vertices with undirected edges---provided that their degree is no more than some constant $w$. 
We will call these $d$ vertices the \emph{parents} of the new vertex. 
Throughout, we label the vertices in the graph by the time $t$ at which they where added. The age of vertex $t$ is $n-t$, where $n$ is the size of the grown network.
For $n\geq1$, $G^{(d,w)}_{n+1}$ is constructed from $G^{(d,w)}_{n}$ by adding the
new vertex $n+1$ that connects to $d$ vertices in $V_n=\{1,\dots,n\}$ independently with probability
\begin{equation}\label{eq: connection prob}
\mathbb{P}(n+1 \text{\ connects to\ } v\in V_n\mid G^{(d,w)}_n)=\frac{w-d_v(n)}{(w-2d)n+c_0},
\end{equation}
where $d_v(n)$ is the degree of vertex $v$ in $G^{(d,w)}_{n}$ and $c_0\geq 0$ is an offset from the initial graph. 
The normalization factor in the denominator of \eqref{eq: connection prob} is equal to the difference in the \emph{maximum} volume $wn$ of grown part in $G^{(d,w)}_n$ and its volume $2d(n)$ at $n$.
It determines the capacity of the vertices $1,\dots,n$ for new connections with future vertices, and as such we refer to it as the \emph{connection capacity} of $G_n^{d,w}$.
We say a set of vertices $X\subset V_n$ is \emph{saturated} if all vertices in $X$ have the maximum degree, and use $c_X(n)$ to denote the connection capacity of a non-saturated set in $G_n^{(d,w)}$.

The global level of resource consumption and generation suggest three distinct growth regimes.
First, if $w<2d$ the connection capacity of the network decreases in its size and the network cannot grow beyond a certain size. 
Second, if $w=2d$ the connection capacity of the network is constant,
reflecting a situation in which resource consumption equals resource generation.
The third and most interesting case occurs when $w>2d$ for which the connection capacity of the network grows in its size and thus reflects a situation in which resources decrease locally, but grow globally.
To obtain an initial insight for this sustained global growth regime, we define 
\[\nu_X(\lambda,n)=\mathbb{E}\left(\frac{c_X(\lambda n)}{c_X(n)}\mid G^{d,w}_n\right),\quad \lambda\geq 1.\]
and study the behavior of this local expected saturation rate in $n$ and $\lambda.$

\begin{figure*}
    \centering
    \includegraphics[width=\linewidth]{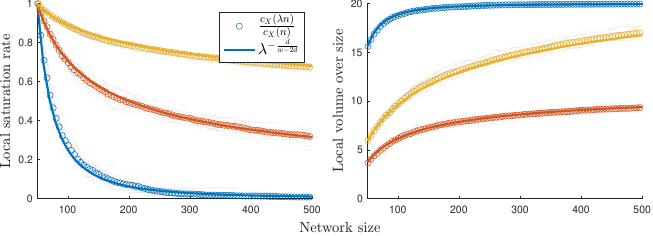}
    \caption{Numerical examples of the power-law saturations in a randomly chosen set in $G_n^{d,w}$ with $|X|=15$ and $n=50$. Different colors correspond to different values of $w$ and $d$ (from top to bottom in left panel): $(w,d)=(40,5)$, $(12,3)$, $(20,8)$.
    The colored plots are averaged over $25$ realizations shown in grey.
   Open circles correspond to the power-law of Proposition~\ref{eq: sat rate}. }
    \label{fig:powerlaw}
\end{figure*}

\begin{proposition}\label{eq: sat rate}
    For $w>2d$, and constant $c_0\geq 0$, the asymptotic expected rate at which sets of vertices saturate in $G_n^{(d,w)}$ is scale invariant under the size of the network. That is, for all $\lambda\geq 1$ and non-saturated $X\in V_n$,
    \[\lim_{n\rightarrow\infty}\nu_X(\lambda, n)=\lambda^{-\frac{d}{w-2d}}.\]
\end{proposition}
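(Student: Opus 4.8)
The plan is to turn $\nu_X(\lambda,n)$ into an explicit \emph{deterministic} product and then read off its limit. Throughout write $c_X(n)=\sum_{v\in X}(w-d_v(n))$ for the connection capacity of $X$ (the quantity appearing in the definition of $\nu_X$); it is nonincreasing in $n$ since degrees only grow, and it is positive precisely because $X$ is non-saturated, so the ratio $c_X(\lambda n)/c_X(n)$ is well defined. First I would establish the one-step conditional expectation: a direct computation from \eqref{eq: connection prob}, using that the vertex added at time $n+1$ contributes $d$ edges and that the normalizer $(w-2d)n+c_0$ is exactly the total remaining capacity of $V_n$, gives
\[
\mathbb{E}\!\left(c_X(n+1)\mid G_n^{(d,w)}\right)=c_X(n)\left(1-\frac{d}{(w-2d)n+c_0}\right).
\]
Vertices of $X$ that are already saturated contribute zero to both sides, so this identity requires no case split on which vertices of $X$ have reached degree $w$.

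Second, I would iterate this with the tower property: for every $m\ge n$,
\[
\mathbb{E}\!\left(c_X(m)\mid G_n^{(d,w)}\right)=c_X(n)\prod_{k=n}^{m-1}\left(1-\frac{d}{(w-2d)k+c_0}\right).
\]
Since $c_X(n)$ is $G_n^{(d,w)}$-measurable it cancels against the denominator in the definition of $\nu_X$, so with $m=\lfloor\lambda n\rfloor$,
\[
\nu_X(\lambda,n)=\prod_{k=n}^{\lfloor\lambda n\rfloor-1}\left(1-\frac{d}{(w-2d)k+c_0}\right),
\]
which is in fact independent of $G_n^{(d,w)}$ (given that $X$ is non-saturated). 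It remains to compute the limit of this product as $n\to\infty$.

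Third, I would evaluate the product asymptotically. Writing each factor as $\big(k+\tfrac{c_0-d}{w-2d}\big)\big/\big(k+\tfrac{c_0}{w-2d}\big)$, the product telescopes into a ratio of Gamma functions,
\[
\nu_X(\lambda,n)=\frac{\Gamma\!\big(\lfloor\lambda n\rfloor+\tfrac{c_0-d}{w-2d}\big)\,\Gamma\!\big(n+\tfrac{c_0}{w-2d}\big)}{\Gamma\!\big(\lfloor\lambda n\rfloor+\tfrac{c_0}{w-2d}\big)\,\Gamma\!\big(n+\tfrac{c_0-d}{w-2d}\big)},
\]
and the standard asymptotic $\Gamma(x+a)/\Gamma(x+b)=x^{a-b}\big(1+O(1/x)\big)$, applied at $x=\lfloor\lambda n\rfloor$ and $x=n$, yields $\nu_X(\lambda,n)=(\lfloor\lambda n\rfloor/n)^{-d/(w-2d)}\big(1+O(1/n)\big)\to\lambda^{-d/(w-2d)}$. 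A Gamma-free alternative is to take logarithms, use $\ln(1-x)=-x+O(x^2)$ (the quadratic remainders summing to $O(\sum_{k\ge n}k^{-2})=O(1/n)$), and evaluate $\sum_{k=n}^{\lfloor\lambda n\rfloor-1}\frac{d}{(w-2d)k+c_0}\to\frac{d}{w-2d}\ln\lambda$ by comparison with $\int_n^{\lambda n}\frac{dt}{(w-2d)t+c_0}$.

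The computation is short; the two places that need genuine care are (i) justifying the exact one-step conditional expectation above, in particular the coefficient $d$ in the numerator, since this is where the exponent $-d/(w-2d)$ is born and where the precise attachment rule, the role of the offset $c_0$, and already-saturated vertices all enter; and (ii) making the error terms in the third step uniform in the relevant quantities, so that $c_0$, the floor $\lfloor\lambda n\rfloor$, and the logarithmic remainder each contribute only $o(1)$ as $n\to\infty$ for fixed $\lambda$. Neither is a serious obstacle, but a careless version of either is where the argument would slip.
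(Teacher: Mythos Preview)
Your proposal is correct and follows essentially the same route as the paper: derive the one-step conditional expectation for $c_X$, iterate via the tower property to a product over $k$, rewrite the product as a ratio of Gamma functions, and read off the limit from $\Gamma(x+a)/\Gamma(x+b)\sim x^{a-b}$. Your treatment is in fact slightly more careful than the paper's (handling $\lfloor\lambda n\rfloor$, noting that saturated vertices contribute zero, and offering the log/integral alternative), but the underlying argument is the same.
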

\begin{proof}
    The proof follows from an analysis of the rate equation of the expected changes in the connection capacity of sets of vertices, as detailed in Appendix~\ref{app: powerlaw}.
\end{proof}

As in Logan's theory of automatization~\cite{logan1988toward}, the power law in the rate of saturations result from two \emph{counteracting} factors:
the more vertices are added to the network, the more likely it is that the capacity of a particular set decreases from new connections to it, but the more connection capacity is added to the total network by new vertices, the less likely it is that new vertices connect to a particular existing set of vertices. \textcolor{black}{Because these factors act multiplicatively over the dimension of growth, a power-law follows.}

Exponents larger than one represent scenarios in which the ratio of the $d$ resources \emph{used} for growth to the $(w-2d)$ resources \emph{added} during growth is high. 
A natural consequence is that the competition for local resources is high and local saturations occur quick. 
However, most sublinear power law scalings observed ecology, economics, finance, and the infrastructural commodities in cities are characterized by an exponent between zero and one~\cite{brown2002fractal,west1997general,gabaix2009power, bettencourt2007growth}.
The origins of these exponents and their exact values are often a source of scientific research. \textcolor{black}{The allometric laws, for example, are attributed to the fractal branching in resource networks}.
Here, it simply translates to the condition that the maximum degree in a network is no less than three times the minimum degree, i.e., $w\geq 3d$. 
\textcolor{black}{At $w=3d$, the added resources are perfectly balanced with the used resources}.
Past this point $w>3d$, buffers are created which can aid flexibility and robustness to sudden changes.
In Appendix~\ref{app: powerlaw}, we show that in the important sublinear regime with exponents smaller or equal than one, for all $n,\lambda>1$ it holds that
 \[|\nu_X(\lambda, n)-\lambda^{-\frac{d}{w-2d}}|=o(n^{-1+\delta}) \quad \text{for all } \delta>0,\]
and an initial connection capacity that is a non-negative multiple of $w-2d$. 
Markov's inequality then furthermore suggests that large deviations from this power law in the local saturations decay quickly in $\lambda$.

This fast convergence of expectations and ``clustering'' of the exponent is shown in Fig.~\ref{fig:powerlaw} and justifies a discussion on its consequences for \emph{finite} networks and systems.
In particular, we can now confidently \emph{reverse} the network analogy to see that the power law decrease in the rate of the local scale-free saturations is analogous to the decreasing pace of life observed in biological systems and sublinear scaling processes in general.
Furthermore, as a result of the \emph{negative} linear relation between local resource consumption and growth, the local volumes of sets follow an \emph{inverse} power law 
\[\mathrm{vol}_X(\lambda n)=w|X|-\lambda^{-\frac{d}{w-2d}}c_X(n)+o(n^{-1+\delta}),\]
shown in the right panel of Fig.~\ref{fig:powerlaw}. 
\textcolor{black}{This is representative of the eventual saturation of sublinear growth processes and are thus often observed in nature~\cite{jensen1998self,bak2013nature}.}

The connection to continuous growth models can made more explicit from the fact that the leading term of the ``instantenous'' growth rate in the volume of a set is proportional to its \emph{share} in the network's connection capacity, that is,
\[\lim_{\lambda\rightarrow1}\left(\frac{\mathrm{vol}_X(\lambda n)-\mathrm{vol}_X(n)}{(\lambda-1)n}\right)=d\frac{c_X(n)}{c_{V_n}(n)}.\]
This local growth equation for the volume of sets shows that an approximate \emph{dynamic} law of proportional effect holds in which local growth rates \emph{decrease} in the network size {\Et while} global resources increase.

This is akin to the infrastructural economies of scale (for example, number of gasoline stations, length of electrical cables, and road surface) in \cite{bettencourt2007growth} whose number grows in city size but saturate locally. 
\textcolor{black}{In contrast, the growth rate of sets grown by linear preferential attachment \emph{increase} in {\Et $\lambda^{1/2}$} 
representative of the superlinear regime, see Table~\ref{table: overview}}

The negative law of proportional effect has also been reported in various empirical studies on firm sizes~\cite{santarelli2006gibrat}.
In other domains, the dimension of growth of the sublinear scaling may be area, time, frequency, intensity, etc., and the proportionality law may apply to density, value and time, instead of volume.
For example, the carrying capacity of a natural habitat tends to increase in its area, but nested areas within it can saturate~\cite{michael1995species,brown2002fractal,wurtz2008roots}. 
Likewise, growing market demands can increase the number of firms over time, but individual firm stock values and growth rates can saturate~\cite{MULLER19901189,eisler2006scaling}.
Finally, as the intensity or frequency of stimuli increases, local processing, sensory perception, and/or attention may saturate~\cite{marois2005capacity,carandini2012normalization,stevens1970neural}.\\

To understand the extent to which the various sublinear scalings observed in different domains {\Et indeed} can be attributed to (local) saturations and proportional growth, one needs to find appropriate domain specific analogies.
For this {\Et purpose}, the metaphors and parallels such as those between sociological and biological systems~\cite{levine1995organism,bettencourt2007growth}, psychology and neurology~\cite{billock2011honor}, and economics and physics~\cite{STANLEY20011}.
\textcolor{black}{This naturally requires detailed domain-specific models.
 Here, we focus on another important aspect of 
 a simpler, idealized process. It 
 aims to capture how local saturations and sublinear growth results in diverse and fair global structures that exhibit several features found throughout natural and complex systems.}

\section{Emergent global structures}\label{sec: dist}

The exponent $\frac{d}{w-2d}$ of the power law {\Et for }saturations ranges rather broadly from an arbitrarily small positive number up to the minimum degree $d$ in the network. 
We will show that this wide range allows for a diverse set of structural properties to emerge on the global scale that all share the property of being \emph{fair} of which the implications become clear shortly.

{\subsection{Degree distribution}}

Degree distributions are crucial to our understanding of the global structure of networks and the processes that evolve over them.
It is thus natural to ask how they are affected by the local scale-free saturations. 
A good place to start is $w=2d$ \textcolor{black}{at which \emph{any} bounded-degree growth process has a constant connection capacity and}
will therefore result in connected $w$-saturated graphs in which almost all vertices have the maximum degree. The following proposition formalizes this observation for the sublinear networks $G_n^{d,w}$.

\begin{proposition}\label{prop: w=2d dist}
   Suppose the initial graph has connection capacity $c_0>0$,
   then the fraction of saturated vertices in $G^{d,2d}_n$ lies in the interval $[1-c_0/n,1)$. 
\end{proposition}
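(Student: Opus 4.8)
The plan is to turn the critical-case fact ``the connection capacity is constant'' directly into the bound; no limiting or probabilistic estimate is needed for the conclusion. First I would note that with $w=2d$ the normalizing denominator of~\eqref{eq: connection prob} is $(w-2d)n+c_0=c_0$, and since that denominator is by construction the connection capacity $c_{V_n}(n)$ of the grown graph, we have $c_{V_n}(n)=c_0$ for all $n$. (The same fact can be read off the one-step dynamics: a newly added vertex brings $w-d=d$ units of spare capacity while its $d$ new neighbors lose one unit each, so the net change is $w-2d=0$ and the capacity stays at its initial value $c_0$.)

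Next I would decompose the capacity over vertices, $c_{V_n}(n)=\sum_{v\in V_n}\bigl(w-d_v(n)\bigr)$, and use that each summand is a nonnegative integer which vanishes precisely when $v$ is saturated. Hence every non-saturated vertex contributes at least $1$ to the sum, so if $s_n$ denotes the number of saturated vertices then $n-s_n\le\sum_{v\in V_n}\bigl(w-d_v(n)\bigr)=c_0$, i.e.\ $s_n\ge n-c_0$; dividing by $n$ gives $s_n/n\ge 1-c_0/n$, the left endpoint. For the strict right endpoint, observe that if every vertex of $V_n$ were saturated the capacity would be $0$, contradicting $c_0>0$; thus at least one vertex is non-saturated, $s_n\le n-1$, and $s_n/n\le 1-1/n<1$. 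Combining the two bounds gives $s_n/n\in[1-c_0/n,1)$.

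I do not expect a genuine obstacle: the argument is elementary once the capacity-conservation identity $c_{V_n}(n)=c_0$ is available, and that identity is exactly the defining feature of the $w=2d$ regime highlighted earlier. The single point that merits care is keeping the denominator of~\eqref{eq: connection prob} equal to the \emph{true} connection capacity along the whole evolution --- that is, that each new vertex really attaches to $d$ existing (non-saturated) vertices --- since it is this exact ``$+d$/$-d$'' bookkeeping, rather than a mere inequality, that pins the capacity to $c_0$ and hence delivers the lower bound on $s_n$.
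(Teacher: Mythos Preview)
Your argument is correct and is essentially identical to the paper's proof, which also rests on the single observation that, since the connection capacity stays equal to $c_0$ when $w=2d$, the number of non-saturated vertices can be at most $c_0$. You simply spell out the details (the vertex-wise decomposition and the strict upper bound from $c_0>0$) that the paper leaves implicit.
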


\begin{proof}
 The proof follows directly from the observation that the number of vertices with a degree less than $w$ is upper bounded by the constant connection capacity~$c_0$.
\end{proof}

Any network grown with $w=2d$ is thus {\Et also} maximally correlated in the sense that every randomly chosen vertex has the same degree as a randomly chosen neighbor asymptotically almost surely.
As seen in Fig.~\ref{fig:powerlaw}, when the maximum degree increases to $w>2d$ saturations quickly slow down and the emergent structure of the growing networks becomes more heterogeneous.
To quantify this for our process, we let $N_k(n)\leq n$ denote the number of vertices with degree $k$ in $G^{(d,w)}_n$.  
The following result establishes convergence of the expected change in the fraction of vertices with degree $k$ in terms of Gamma functions $\Gamma(\cdot)$.
\color{black}
\begin{proposition}\label{thm: dist}
     For $w>2d$ and $k\in\{d,\dots,w\}$ the fraction $\frac{N_k(n)}{n}$ converges to
    \begin{equation}\label{eq: degree dist}
        \rho_k^{(d,w)}=\left(\frac{w}{d}-2\right)\frac{\Gamma(w-d+1)\Gamma(w-k+\frac{w}{d}-2)}{\Gamma(w-k+1)\Gamma(w-d+\frac{w}{d}-1)},
    \end{equation}
  with convergence rate $||N_k(n)/n-\rho_k(w,d) ||=o(n^{-\frac{1}{2}+\delta})$ for all $\delta>0$, almost surely. 
\end{proposition}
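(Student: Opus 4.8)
The plan is to use the master-equation (rate-equation) method standard for degree sequences of growing random graphs, together with a martingale concentration bound for the almost-sure rate. First I would set up the one-step recursion for $\mathbb{E}[N_k(n+1)\mid G_n]$. Vertex $n+1$ contributes exactly $d$ edges, and by~\eqref{eq: connection prob} each of them attaches to a vertex of current degree $j$ with probability proportional to $w-j$; hence a vertex presently in degree class $j<w$ is promoted to class $j+1$ during step $n$ with probability $\tfrac{d(w-j)}{(w-2d)n+c_0}$ up to an $O(n^{-2})$ correction (the rare event of absorbing two of the $d$ new edges), while the new vertex itself enters class $d$. Collecting in- and out-flows of class $k$ gives
\begin{align*}
\mathbb{E}[N_k(n+1)\mid G_n]=N_k(n)&+\mathbf{1}[k=d]+\frac{d(w-k+1)}{(w-2d)n+c_0}N_{k-1}(n)\\
&-\frac{d(w-k)}{(w-2d)n+c_0}N_k(n)+O(n^{-1}),
\end{align*}
with the conventions $N_{d-1}\equiv 0$ and $N_{w+1}\equiv 0$, so that saturated vertices are absorbed in class $w$.

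Second, I would substitute the ansatz $N_k(n)=\rho_k n+o(n)$ and let $n\to\infty$, reducing the recursion to the linear recurrence $\rho_k\,(w-2d+d(w-k))=(w-2d)\,\mathbf{1}[k=d]+d(w-k+1)\,\rho_{k-1}$ for $k=d,\dots,w$. The base case gives $\rho_d=\tfrac{w-2d}{w-2d+d(w-d)}$, and for $k>d$ one reads off $\rho_k/\rho_{k-1}=\tfrac{w-k+1}{w-k+\frac{w}{d}-2}$; telescoping this product and rewriting the resulting falling factorials as quotients of Gamma functions yields exactly~\eqref{eq: degree dist}. Summing the recurrence over $k$ collapses to $(w-2d)\sum_k\rho_k=w-2d$, confirming $\sum_{k=d}^{w}\rho_k^{(d,w)}=1$, so the limit is a genuine probability distribution.

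For the almost-sure convergence rate I would argue in two steps. (i) Put $\Delta_k(n)=\mathbb{E}[N_k(n)]-\rho_k n$ and subtract the fixed-point identity from the expectation of the rate equation; this gives a triangular recursion $\Delta_k(n+1)=(1-\tfrac{a_k}{n})\Delta_k(n)+\tfrac{b_k}{n}\Delta_{k-1}(n)+O(n^{-1})$ with $a_k=\tfrac{d(w-k)}{w-2d}\ge 0$, $b_k=\tfrac{d(w-k+1)}{w-2d}$. An induction on $k$ from $\Delta_{d-1}\equiv 0$, using that $x_{n+1}=(1-\tfrac{a}{n})x_n+O(n^{-1})$ with $a>0$ stays $O(1)$, shows $\Delta_k(n)=O(1)$ for $d\le k\le w-1$ and $\Delta_w(n)=O(\log n)$ (since $a_w=0$); in particular $|\mathbb{E}[N_k(n)]-\rho_k n|=o(n^{1/2+\delta})$. (ii) Apply Azuma--Hoeffding to the Doob martingale $M_j=\mathbb{E}[N_k(n)\mid\mathcal{F}_j]$ along the natural filtration $(\mathcal{F}_j)$: the randomness of step $j+1$, i.e.\ the choice of the $d$ parents of vertex $j+1$, perturbs $N_k(n)$ by at most a constant multiple of $d$, so $\mathbb{P}(|N_k(n)-\mathbb{E}[N_k(n)]|>n^{1/2+\delta})$ decays faster than any power of $n$, and Borel--Cantelli gives $|N_k(n)-\mathbb{E}[N_k(n)]|=o(n^{1/2+\delta})$ a.s. Adding the two bounds, dividing by $n$, and relabelling $\delta$ gives $\|N_k(n)/n-\rho_k^{(d,w)}\|=o(n^{-1/2+\delta})$ almost surely.

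The main obstacle is the bounded-difference estimate underlying step (ii): changing which vertices a new vertex attaches to alters the degrees --- and hence all subsequent attachment probabilities --- of up to $d$ vertices, and one has to rule out amplification of this perturbation along the remaining steps. As in the classical analysis of preferential-attachment degree sequences, this is handled by coupling the two processes one step at a time and keeping them within a bounded Hamming distance, so that $N_k(n)$ differs by only $O(d)$ between the coupled runs. The remaining ingredients --- the Gamma-function algebra for the fixed point and the elementary estimate for $\Delta_k(n)$ --- are routine and would be carried out in the appendix.
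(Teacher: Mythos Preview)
Your derivation of the rate equation and the Gamma-function fixed point coincides with the paper's. Where you diverge is in the concentration step. The paper does not use Azuma--Hoeffding on the Doob martingale; instead it casts $Z(n)=(N_d(n)/n,\dots,N_w(n)/n)$ as a stochastic-approximation recursion $Z(n{+}1)-Z(n)=\tfrac{1}{n+1}(F(Z(n))+E_{n+1}+R_{n+1})$ and invokes a black-box rate theorem (Chen, Prop.~3.1.1). The required hypotheses are that the Jacobian of $F$ is stable---it is lower bidiagonal with diagonal $-(1+\tfrac{d(w-k)}{w-2d})\le -1$---and that the one-step martingale increment $E_{n+1}$ is bounded, which is immediate since at most $d{+}1$ vertices change degree class per step. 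This sidesteps the very coupling lemma you flag as the main obstacle: the paper never needs to argue that a perturbation at step $j$ does not amplify through the remaining attachment probabilities. Your route is also sound---because the two perturbed processes have identical total capacity at every time, a ball-bijection coupling keeps the $\ell^1$ discrepancy between the degree sequences non-increasing and hence $O(d)$---but that is a genuine lemma to prove, not a citation. What you gain in exchange is a sharper and more explicit mean estimate ($\Delta_k(n)=O(1)$ for $k<w$, $O(\log n)$ for $k=w$) than the stochastic-approximation theorem delivers on its own.
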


\begin{proof}
    The proof is constructed through standard arguments from stochastic approximation methods and can be found in Appendix~\ref{app 1}.
\end{proof}

\noindent The convergence rate of finite networks to the asymptotic expectation allows us to relate the shape and moments of the limit distribution to the local saturations in large but finite networks. {\Et Next, we use this to characterize a notion of fairness, which turns out to be a signifying property, before describing the diversity in degree distributions and correlations in more detail.   }

\begin{remark}
Equation~\eqref{eq: connection prob} implies that, for $w>2d$, the probability that a new vertex chooses the same parent vertex more than once decreases in $O(n^{-2})$. 
Hence, the density of multiple edges and vertices with a degree larger than $w$ approaches zero.
In our calculations, we have therefore lumped these violations into $w$. To completely rule them out, the $d$ edges can be added one and a time.
\end{remark}

\subsection{Fairness, Taylor's law and shape transitions}
The signifying property of the growth process with scale-free saturations is that the edges are distributed among the vertices---not by how much they have---but by how much they are \emph{missing}. 
Intuitively, this leads to a fair and \textcolor{black}{decentralized} allocation that is quite the opposite of the rich-get-richer effect in hub-like scale-free networks.
\textcolor{black}{Both aspects are important and indicative of the contrast between the superlinear and sublinear regime.
For example, infrastructural networks tend to become decentralized from incremental growth as they connect to new parts~\cite{doi:10.1126/science.1235823,5255fb9c-f048-3309-98fd-bd044c76783b,gastner2006optimal}}. 
Moreover, in network allocation problems fairness often plays an integral role in optimality~\cite{kelly1998rate,5461911,gastner2006optimal}.
More recently, it was shown that such an integral approach to fairness is also critical for the accuracy of real-world artificial intelligence systems with inherent hardware constraints in their architecture and circuitry~\cite{guo2024hardware}.
\textcolor{black}{It therefore becomes of interest to understand the structural fairness of networks and how it may within and between different growth regimes.}
Proposition~\ref{thm: dist} allows us to quantify the eventual global fairness of the \emph{structure} of the generated networks by applying Jain's fairness index~\cite{jain1984quantitative,5461911} to the degree distribution~\eqref{eq: degree dist}. 

\begin{proposition}
    For $w>2d$, the asymptotic structural fairness index of $G_n^{(d,w)}$ is
    \[J(d,w)=\frac{4d}{4d+(d+1)(1-\frac{2d}{w})}.\]
\end{proposition}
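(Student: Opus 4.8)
The plan is to evaluate Jain's fairness index on the \emph{degree sequence} of $G_n^{(d,w)}$ and pass to the limit $n\to\infty$ with the help of Proposition~\ref{thm: dist}. For a non-negative allocation $(x_1,\dots,x_m)$ Jain's index is $(\sum_i x_i)^2/\big(m\sum_i x_i^2\big)$, so applied to the vertex degrees $d_1(n),\dots,d_n(n)$ it equals
\[
J_n=\frac{\big(\tfrac1n\sum_{v} d_v(n)\big)^2}{\tfrac1n\sum_{v} d_v(n)^2}
=\frac{\big(\sum_{k} k\,N_k(n)/n\big)^2}{\sum_{k} k^2 N_k(n)/n}.
\]
Since degrees eventually take values only in the finite set $\{d,\dots,w\}$ (the Remark above guarantees that the fraction of vertices outside this range vanishes) and $N_k(n)/n\to\rho_k^{(d,w)}$ almost surely by Proposition~\ref{thm: dist}, each of the finitely many summands converges, hence $J_n\to M_1^2/M_2$ almost surely, where $M_j:=\sum_{k=d}^{w}k^j\rho_k^{(d,w)}$ denotes the $j$-th moment of the limiting degree distribution. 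It therefore suffices to compute $M_1$ and $M_2$.

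For the first moment I would invoke the handshake lemma: every step of the process adds exactly one vertex of degree $d$ together with $d$ new edges, so the degree sum at time $n$ is $2dn+O(1)$ and consequently $M_1=2d$. (This value will also fall out of the moment recursion below as an internal consistency check.)

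For the second moment I would use the stationary balance relations satisfied by the limiting fractions, which are the fixed-point equations underlying the proof of Proposition~\ref{thm: dist} and can be checked directly from \eqref{eq: degree dist}: writing $\alpha:=(w-2d)/d$,
\[
(\alpha+w-k)\,\rho_k^{(d,w)}=(w-k+1)\,\rho_{k-1}^{(d,w)},\qquad d<k\le w,
\]
together with the boundary value $\rho_d^{(d,w)}=\dfrac{w-2d}{\,w-2d+d(w-d)\,}$, equivalently $(\alpha+w-d)\,\rho_d^{(d,w)}=\alpha$. Multiplying the balance relation by $k^2$, summing over $k=d+1,\dots,w$, and re-indexing the right-hand side by $j=k-1$, the cubic moment $M_3$ that appears on both sides cancels, leaving a linear equation for $M_2$ in terms of $M_1$, $\rho_d^{(d,w)}$, and the constants $d,w$. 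Substituting $M_1=2d$, eliminating $\rho_d^{(d,w)}$ via $(\alpha+w-d)\rho_d^{(d,w)}=\alpha$, and using $\alpha+2=w/d$, this collapses to
\[
M_2=5d^{2}+d-\frac{2d^{2}(d+1)}{w}.
\]
The only genuinely delicate part of the argument is this bookkeeping step: keeping track of which moment terms survive the telescoping and simplifying the resulting constants; everything preceding it is immediate from Proposition~\ref{thm: dist}.

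Finally I would substitute into $J(d,w)=M_1^2/M_2=4d^2/M_2$, divide numerator and denominator by $d$, and rewrite $5d+1-\tfrac{2d(d+1)}{w}=4d+(d+1)\big(1-\tfrac{2d}{w}\big)$ to obtain the stated formula. As sanity checks, letting $w\to 2d^{+}$ forces $J\to1$, consistent with the essentially regular, maximally correlated graphs of Proposition~\ref{prop: w=2d dist}; $J$ decreases monotonically to $4d/(5d+1)$ as $w\to\infty$; and for $(d,w)=(1,3)$ the limit distribution is uniform on $\{1,2,3\}$, giving $J=6/7$ in agreement with the formula.
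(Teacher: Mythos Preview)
Your proposal is correct and follows essentially the same route as the paper: both compute Jain's index as $M_1^2/M_2$ (equivalently $\mathbb{E}^2/(\mathbb{E}^2+\mathrm{Var})$) from the first two moments of the limiting distribution in Proposition~\ref{thm: dist}, with $M_1=2d$ and $M_2=5d^2+d-2d^2(d+1)/w$. The paper merely asserts these moment values as ``easily verified,'' whereas you supply the verification---the handshake argument for $M_1$ and the balance recursion $(\alpha+w-k)\rho_k=(w-k+1)\rho_{k-1}$ for $M_2$---which is exactly the natural way to carry out that check.
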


\begin{proof}
For ease of notation we write $\rho_k^{(d,w)}=\rho_k$. Using \eqref{eq: degree dist}, it is easily verified that the average degree and variance of the asymptotic degree distribution are given by
\begin{equation*}
    \mathbb{E}(\rho_k)=2d, \quad \mathrm{Var}(\rho_k)= d(d+1)\left(1-\frac{2d}{w}\right).
\end{equation*}
The fairness index then follows from its definition $$J(d,w)=\frac{\mathbb{E}(\rho_k)^2}{\mathbb{E}(\rho_k)^2+\mathrm{Var}(\rho_k)}.$$
\end{proof}

We now have a global fairness index in terms of $w$ and $d$ that can be compared across the range of the exponent and other existing network models.
The asymptotically regular graphs achieve the maximum fairness index equal to $1$, while the lowest fairness index $4d/(5d+1)$ is obtained when the maximum degree tends to infinity.
To put this in context: a \emph{static} random graph with mean degree $2d$ has a fairness index of $2d/(2d+1)$, and scale-free networks with an exponent between two and three asymptomatically have the minimum fairness index $0$.
This shows that the generated networks are structurally fair across the range of the exponent and may thus exhibit the beneficial properties associated to fairness.\\

At first sight, it may then seem that the generated networks are all quite similar.
However, the high fairness indices are due to a low coefficient of variation, which normalizes the standard deviation with the mean degree $2d$ and both change similarly across the range of the exponent. 
With a bit of algebra we obtain the following relation.

\begin{corollary}
For all $w>2d$, it holds that 
    \begin{equation}\label{eq: Taylor}
    \mathrm{Var}(\rho_k)=\left(\frac{1}{J(d,w)}-1\right)\mathbb{E}(\rho_k)^2.
\end{equation}
\end{corollary}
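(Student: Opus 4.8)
The plan is to observe that this is an immediate algebraic rearrangement of the fairness index definition, so the only real content is unwinding the notation correctly. Recall that the preceding proposition gives $J(d,w) = \mathbb{E}(\rho_k)^2 / \big(\mathbb{E}(\rho_k)^2 + \mathrm{Var}(\rho_k)\big)$, where the proof of that proposition established $\mathbb{E}(\rho_k) = 2d$ and $\mathrm{Var}(\rho_k) = d(d+1)(1 - \tfrac{2d}{w})$. So I would start from the definition of $J$, take reciprocals, and isolate the variance term.

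First I would write $1/J(d,w) = \big(\mathbb{E}(\rho_k)^2 + \mathrm{Var}(\rho_k)\big)/\mathbb{E}(\rho_k)^2 = 1 + \mathrm{Var}(\rho_k)/\mathbb{E}(\rho_k)^2$. Subtracting $1$ from both sides gives $1/J(d,w) - 1 = \mathrm{Var}(\rho_k)/\mathbb{E}(\rho_k)^2$, and multiplying through by $\mathbb{E}(\rho_k)^2$ yields exactly \eqref{eq: Taylor}. This holds for every $w > 2d$ since that is precisely the regime in which $J(d,w)$ and the moments $\mathbb{E}(\rho_k), \mathrm{Var}(\rho_k)$ were defined and are finite and positive (in particular $\mathbb{E}(\rho_k) = 2d \neq 0$, so division is legitimate). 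Optionally I would also verify consistency by plugging in the closed forms — $1/J(d,w) - 1 = (d+1)(1-\tfrac{2d}{w})/(4d)$, so the right-hand side of \eqref{eq: Taylor} is $(d+1)(1-\tfrac{2d}{w})/(4d) \cdot (2d)^2 = d(d+1)(1-\tfrac{2d}{w})$, matching $\mathrm{Var}(\rho_k)$ — but this is just a sanity check, not needed for the argument.

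There is essentially no obstacle here: the corollary is a one-line consequence of the definition of Jain's index, and its purpose in the paper is interpretive, namely to exhibit the relation as an instance of Taylor's law (fluctuation scaling) with exponent exactly $2$ and a slope $1/J(d,w) - 1$ governed by the structural fairness. If anything, the only thing to be careful about is not to conflate "$\rho_k$" (the limiting degree distribution as a function of $k$) with a random variable; the expectation and variance are taken over the degree distribution $\{\rho_k\}$, exactly as in the proof of the preceding proposition, and I would keep that notation consistent rather than reintroducing it.

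\begin{proof}
By the definition of Jain's fairness index used in the previous proposition,
\[J(d,w)=\frac{\mathbb{E}(\rho_k)^2}{\mathbb{E}(\rho_k)^2+\mathrm{Var}(\rho_k)}.\]
Since $w>2d$, the previous proposition gives $\mathbb{E}(\rho_k)=2d>0$, so we may take reciprocals:
\[\frac{1}{J(d,w)}=\frac{\mathbb{E}(\rho_k)^2+\mathrm{Var}(\rho_k)}{\mathbb{E}(\rho_k)^2}=1+\frac{\mathrm{Var}(\rho_k)}{\mathbb{E}(\rho_k)^2}.\]
Subtracting $1$ from both sides and multiplying through by $\mathbb{E}(\rho_k)^2$ yields \eqref{eq: Taylor}.
\end{proof}
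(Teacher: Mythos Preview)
Your proposal is correct and matches the paper's approach exactly: the paper introduces this corollary with the phrase ``With a bit of algebra we obtain the following relation'' and gives no further proof, so the intended argument is precisely the one-line rearrangement of the fairness index definition that you wrote out.
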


\begin{figure}
    \centering
    \includegraphics[width=\linewidth]{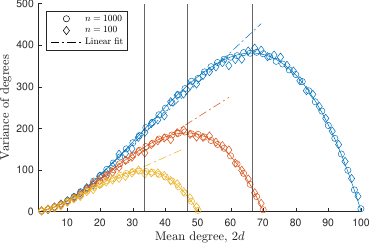}
    \caption{Simulation on the relation between the square of the expected degree and the variance in degree distributions. Different colors correspond to the maximum degrees $w=(50,80,100)$. The values of $d$ range from $1$ to $\lceil w/2\rceil$. 
    The solid curves are computed by~\eqref{eq: Taylor}. 
    The markers show the average variance of $25$ realizations of a network of size $1000$ (circles) and $100$ (diamond). 
    The three vertical lines indicate the points at which the exponent of saturations is equal to one, i.e., $w=3d$m \textcolor{black}{and the degree distribution is uniform}. 
   \textcolor{black}{The dashed dotted curves are linear functions fitted over the exponents of saturations ranging from $1/8$ to $3/4$ for the three $w$ values. Over this range, each fit has a coefficient of determination larger than $0.98$.}}
    \label{fig:Taylor}
\end{figure}
\begin{figure}
    \centering
    \includegraphics[width=\linewidth]{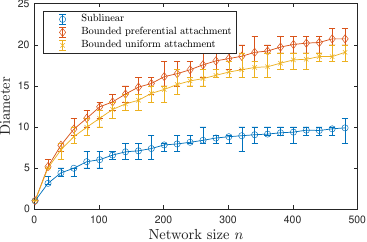}
    \caption{The average diameter of 25 growing networks with a mean degree close to its maximum degree $w=2d+1$. The networks are generated with three different attachment processes: sublinear growth with an exponent of saturation $d$, bounded uniform attachment, and bounded preferential attachment \textcolor{black}{in which new vertices attach uniformly and preferentially to the subset of vertices with a degree less than $w$. }
    The negative feedback in the sublinear growth rates slow down saturations, thereby \textcolor{black}{also} slowing down the increase of the diameter of the network during its growth.} 
    \label{fig:diameter}
\end{figure}
\begin{figure*}
    \centering
    \includegraphics[width=\linewidth]{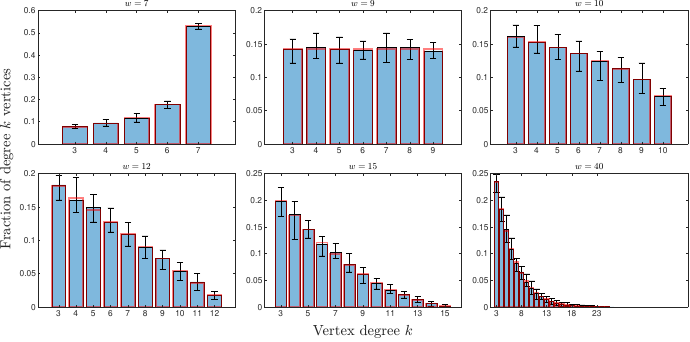}
    \caption{Degree distributions of $G^{d,w}_n$ with $d=3$ and varying $w$. The distributions are averaged over 25 realizations of $G^{(d,w)}_n$ with $n=1001$. The red border corresponds to the analytic degree distribution in~\eqref{eq: degree dist}. 
    Error bars show the maximum and minimum realizations and are in accordance with the $1/\sqrt{n}$ error term of Proposition~\ref{thm: dist}.}
    \label{fig:dists}
\end{figure*}
The mean and variance of degrees can thus be expressed by a simple quadratic relation with a \emph{slope} that is naturally interpreted by the fairness index. 

This relation applies generally to any network with a fairness index larger than zero.
\textcolor{black}{Therefore, the importance of \eqref{eq: Taylor} is not that it exists but rather the \emph{shape} of the variance-mean relation.}  
\textcolor{black}{In particular,
the slowly varying fairness index causes the variance in Fig.~\ref{fig:Taylor} to be left-skewed, implying that, for a broad range of sublinear scaling exponents, the quadratic relation is approximately \emph{linear}.
This is illustrated by the linear slope in Fig.~\ref{fig:Taylor} which has an excellent fit over exponent values ranging from $1/8$ to $3/4$.}
Thus, for the most widely observed sublinear exponents, the local scale-free saturations express themselves at the global scale by degree distributions that approximately follow Taylor's law~\cite{f4eb8472-e94e-3d96-8e2a-f2462ed4df6b,taylor1961aggregation} with a \emph{constant} exponent~$2$.
Remarkably, it is precisely this exponent $2$ that is also widely reported in empirical studies on Taylor's law throughout natural science, information technology, finance, and psychology~\cite{doi:10.1073/pnas.1505882112,wagenmakers2007linear,Eisler01012008}.
\textcolor{black}{We return to this observation in the discussion section}.
\\
The overall bell shape of \eqref{eq: Taylor} is \textcolor{black}{also} indicative of the diversity of the degree distributions whose \emph{skewness} varies significantly across the exponent of saturations.
This is most easily seen by the \emph{shape transitions} in the degree distributions shown in Fig.~\ref{fig:dists}
\textcolor{black}{and the continuous approximation of the degree distributions that suggest $\rho_k\sim (w-k)^{w/d-3}$.}
When the exponent of saturations is larger than one ($2d<w<3d$), the quick saturations result in an increasing degree distribution.
This corresponds to a network that operates close to its maximum capacity and it 
is also here that sublinear growth is particularly important for the network's diameter, see Fig.~\ref{fig:diameter}.
At $w=3d$, the exponent is one and local sets saturate harmonically in the networks size which leads to a \emph{uniform} degree distribution
 \[\rho^{(d,3d)}_k=\frac{1}{1+2d}.\]
Here the variance increases maximally in the mean degree as seen in Fig.~\ref{fig:Taylor}. 
At this point, the volume of the network $2dn$ is two-thirds of its maximum $wn$ and below this density the degree distribution becomes decreasing.
When $w$ increases to $4d$, the exponent is $1/2$ and the degree distribution shows a \emph{linear} decay
\[\rho^{(d,4d)}_k=\frac{2-2k+8d}{2+9d(d+1)}.\]
So while the shape of the distribution changes quite drastically between the exponents $1$ and $1/2$, the fairness index drops no more than $2/35$.
The degree distribution becomes increasingly steep as $w$ grows and approaches the left-truncated \emph{geometric}
\[\lim_{w\rightarrow\infty}\rho^{(d,w)}_k=\frac{1}{1+d}\left(\frac{d}{d+1}\right)^{k-d}
,\]
at which the networks achieve their minimum fairness index and lowest asymptotic density $0$.  
This thus corresponds to networks in which the vast majority of vertices have a large connection capacity. 
In the continuous degree approximation obtained by the rate equation method this limit coincides with an exponential decay in the degree distribution with rate $1-k/d$. 
Indeed, as the exponent of saturations goes to zero the sublinear growth scalings approach linear scalings whose neutral selection principle result in ``single scale'' networks.
In between these cases, the degree distribution exhibits sublinear and superlinear decay, as the sets of vertices saturate with an increasingly small fractional exponent in the size of the network.\\

The importance is that a diverse set of global structures can result from (small) changes in the exponent, while the fairness index of the global structure remains largely unchanged.
Moreover, because the emergent global structures are determined by the fractional exponent they are also \emph{independent} of the scale $c\cdot(w-d)$ with $c$ in $\mathbb{N}^+$.
That is, analogous to how an elephant is a blown up version of a mouse, the degree distribution of a sublinearly growing network with a maximum degree $500$ and minimum degree $30$, is blown up version of a network with a maximum degree $50$ and minimum degree $3$. 
This is however, not the complete picture.  
One may wonder, for example, why do we mostly observe smaller exponents and hetereogenous networks \textcolor{black}{in the real-world} as opposed to the maximally fair asymptotically regular ones?

\begin{figure*}
    \centering
    \includegraphics[width=1\linewidth]{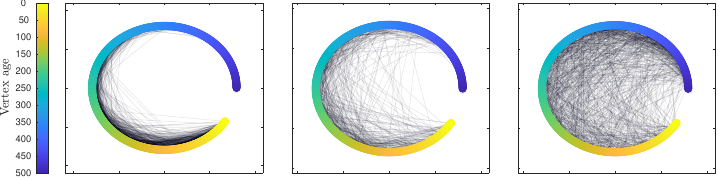} 
    \caption{Generated networks with $500$ vertices consecutively placed on an open circle with increasing vertex age $500-t$. From left to right the values of $(w,d)$ are $(50,24),(7,3),(14,3)$ with corresponding exponents $12,2,3/8$. 
    The network in the left panel is much denser than the other two, but all edges are closely knit between vertices with a similar age, \textcolor{black}{creating age-based communities}. 
    Because sublinear growth avoids local saturations these communities quickly disappear as seen from the increasing number of edges through the interior of the open circle in the middle and right panel. 
    From left to right the network diameters are: $10,10,6$.}
    \label{fig:networks}
\end{figure*}
\begin{figure*}
    \centering
    \includegraphics[width=\linewidth]{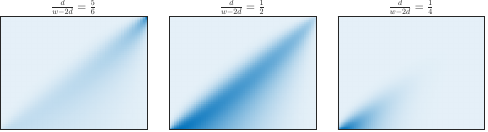}
    \caption{Visuzalization of parent-child degree correlations computed by \eqref{eq: mkl general solution 2}. The lower left corner correspond to $\rho_{d,d+1}$ and the right up corner to $\rho_{w,w}$. For high exponents, the saturations dominate correlations (left panel), for low exponents the growth process dominates (right panel), in between correlations are due to saturations and growth (middle panel). }
    \label{fig:cormatrix}
\end{figure*}

{\subsection{Degree correlations and age-based communities}}
While degree distributions and their shapes provide important clues to the extent of saturated structures in the networks, they fail to capture the underlying correlations that are also often present in the real-world~\cite{PhysRevLett.89.208701}. 
In superlinear network growth models and static models these degree correlations can be attributed to the attachment process and structural cutoffs in finite networks~\cite{refId0,catanzaro2005generation}.
In the presence of local saturations there is one more thing to consider: connected sets of saturated vertices are inherently assortative, and their presence thus has the potential to alter, or even dominate global correlations.
Of course, the extent at which this occurs depends on how fast vertices and their neighbors saturate.
In the sublinear growth process this is thus again closely tied to the exponent of the scale-free saturations.\\

The left panel of Fig.~\ref{fig:networks} shows that when local saturations are fast new connections can only be made to the young tail of the network and age-based communities appear in the structure of the network.
In spatial growth processes, this tail reflects the moving boundary of an expanding area, such as the outer suburbs of a city that have a large capacity for an increase in infrastructural commodities, or the border of an expanding habitat at which there is less competition for resources.
For the maximum exponent values the age-based communities can be so extreme that the diameters of these networks are similar to those that are considerably less dense but have a smaller exponent, as seen in middle panel of Fig.~\ref{fig:networks}.

\begin{figure*}
    \centering
    \includegraphics[width=\linewidth]{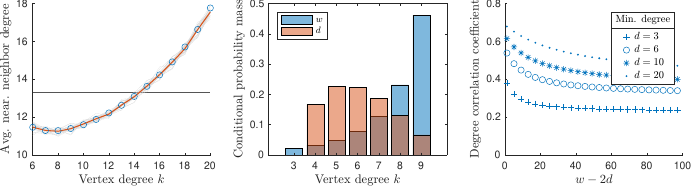}
    \caption{Example of degree correlations in generated networks. The left panel shows an example of the average nearest neighbor degree as function of $k$. Circles correspond to $\sum_{l=d}^w lp(l\mid k)$, the orange line is the average of $25$ realization in a $G_{1001}^{6,20}$ shown in gray. The horizontal line shows the global average. The middle panel shows the conditional degree distribution for vertices with degree $d=3$ and vertices with degree $w=9$ computed using~\eqref{eq: con dist}. The right panel shows the global positive degree correlation coefficient for several $d$ and $w$ computed using the limit joint degree distribution in~\eqref{eq: joint dist}.
    }
    \label{fig:corr}
\end{figure*}
 
When the exponent decreases even further the slowed down saturations allow young vertices to have increasingly older parents, effectively widening the connectivity ``time-window" until, eventually, the age-based communities completely disappear and the diameter becomes shorter and shorter~as seen in the right panel of Fig.~\ref{fig:networks}.
\textcolor{black}{The local saturations thus naturally induce time-windows within connections between vertices are more likely, known as a developmental mechanism for connectomes~\cite{kaiser2017mechanisms,CHIANG20111}. 
In the idealized sublinear process, the rate of saturations and its parametrization into the degree bounds largely determine the length of the window and its impact on the growing network's modularity or community structure. }

\begin{widetext}
        \begin{align}\label{eq: mkl general solution 2}
   \rho_{kl}=\frac{\left({\frac{w}{d}-2}\right){\Gamma(\frac{w}{d}-2+w-k+w-l)\Gamma(w-d+1)^2}}{\Gamma(w-d+\frac{w}{d}-1)\Gamma(w-k+1)\Gamma(w-l+1)}
   \sum_{j=1}^{l-d}\binom{l-d+k-d-j}{k-d}\frac{\Gamma(\frac{w}{d}-1+w-d-j)}{\Gamma(\frac{w}{d}-1+2(w-d)-j)}.
\end{align}
      \end{widetext}

To quantify how this affects the correlations in the network we follow~\cite{PhysRevE.63.066123} and approximate $N_{kl}(n)$, defined as the number of added vertices with degree $k$ that have a parent with degree~$l$. 
Because each added vertex has $d$ links to parents this quantity is related to the degree distribution~\eqref{eq: degree dist} by $\sum_{l}N_{kl}(n)=dN_k(n)$. 
Moreover, since each parent vertex, by construction, has at least degree $d+1$ it must hold that $N_{kd}=0$ for all $k=d,\dots,w$. 
These boundary conditions can be used together with Proposition~\ref{thm: dist} to obtain the following result.\\
\color{black}
\begin{proposition}\label{thm: ref}
For $w>2d$, $k\in\{d,\dots,w\}$ and $l\in\{d+1,\dots,w\}$ the fraction $\frac{N_{kl}(n)}{n}$ converges to $\rho^{(d,w)}_{kl}$ in \eqref{eq: mkl general solution 2}
with convergence rate $||N_{kl}(n)/n-\rho^{(d,w)}_{kl} ||=o(n^{-\frac{1}{2}+\delta})$ for all $\delta>0$ almost surely.
\end{proposition}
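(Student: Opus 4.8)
The plan is to run the stochastic-approximation argument of Appendix~\ref{app 1} on the vector $Y_n=\big(N_{kl}(n)\big)_{d\le k\le w,\ d+1\le l\le w}$, feeding in the already-established convergence of the marginal counts $N_{l-1}(n)$ from Proposition~\ref{thm: dist} as an external input. First I would write down $\mathbb{E}\big(N_{kl}(n+1)-N_{kl}(n)\mid G^{(d,w)}_n\big)$ by bookkeeping the effect of the new vertex $n+1$, whose $d$ parents are selected according to~\eqref{eq: connection prob}. There are four channels. (i) A \emph{gain}: each parent $v$ with pre-attachment degree $l-1$ becomes a degree-$l$ parent of the degree-$d$ vertex $n+1$, creating a new $(k,l)=(d,l)$ pair at expected rate $d\,(w-l+1)N_{l-1}(n)/\big((w-2d)n+c_0\big)$; this feeds the whole row $k=d$ and is the only place where the marginal degree count enters. (ii) An \emph{in-flux} from $(k,l-1)$ pairs whose parent receives the new edge, at rate $\propto (w-l+1)$ per such pair. (iii) An \emph{in-flux} from $(k-1,l)$ pairs whose child receives the new edge, at rate $\propto (w-k+1)$ per such pair. (iv) An \emph{out-flux} of $(k,l)$ pairs whose parent or child receives the new edge, at combined rate $\propto (w-k)+(w-l)$ per such pair. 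The events that a parent is chosen more than once, or that a child and its parent are hit in the same step, contribute only an $O(n^{-1})$ correction, consistent with the Remark after Proposition~\ref{thm: dist}; and $|N_{kl}(n+1)-N_{kl}(n)|$ is bounded by a constant depending only on $d$ and $w$.

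Dividing by $n$ and rescaling puts $y_n:=Y_n/n$ in the standard Robbins--Monro form $y_{n+1}=y_n+\tfrac1{n+1}\big(F(y_n)+g_n-y_n+\xi_{n+1}+O(1/n)\big)$, where $\xi_{n+1}$ is a bounded martingale increment, $g_n$ is the deterministic, known gain inhomogeneity coming from channel (i) that converges to $\tfrac{d}{w-2d}(w-l+1)\rho_{l-1}$ at rate $o(n^{-1/2+\delta})$ by Proposition~\ref{thm: dist}, and $F$ is \emph{affine} in $y$ with drift matrix $DF$ that is \emph{strictly lower triangular} once the index pairs are ordered by $k+l$: channels (ii) and (iii) only move mass out of states with smaller $k+l$, and channel (i) enters $F$ through $\rho_{l-1}$ only, not through $y$ itself. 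Consequently $DF-\mathrm{Id}$ has spectrum equal to its diagonal, whose $(k,l)$-entry is $-1-\tfrac{d}{w-2d}\big((w-k)+(w-l)\big)\le -1$, so the mean-field map is a contraction with a unique fixed point. This is exactly the situation handled by the stochastic-approximation theorems invoked in Appendix~\ref{app 1}: the martingale part has increments of order $1/n$ and is controlled at order $\sqrt{\log n}/\sqrt n$ by an Azuma / iterated-logarithm bound, the triangular drift with diagonal $\le-1$ propagates this bound (and the $o(n^{-1/2+\delta})$ error inherited from $g_n$) down the ordering without loss, and one obtains almost-sure convergence of $N_{kl}(n)/n$ to the unique fixed point at rate $o(n^{-1/2+\delta})$.

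It then remains to verify algebraically that the closed form~\eqref{eq: mkl general solution 2} is that fixed point, i.e.\ that it solves the stationary recursion $\rho_{kl}\big(1+\tfrac{d}{w-2d}((w-k)+(w-l))\big)=\tfrac{d}{w-2d}\big[(w-l+1)\rho_{k,l-1}+(w-k+1)\rho_{k-1,l}+\delta_{kd}(w-l+1)\rho_{l-1}\big]$ together with the boundary data $\rho_{kd}=0$ and $\sum_l\rho_{kl}=d\rho_k$ with $\rho_k$ as in~\eqref{eq: degree dist}; after cancelling the common $\Gamma$-prefactor this reduces to a ratio identity for the summand and a telescoping/Vandermonde collapse of the $j$-sum. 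I expect the main obstacle to be bookkeeping rather than analysis: getting the four transition rates and their normalizations exactly right---in particular the asymmetry between the child-side shift (iii) and the parent-side shift (ii), the fact that the gain (i) feeds the entire $k=d$ row, and the propagation of the marginal's $o(n^{-1/2+\delta})$ error through the coupled recursion. The check that~\eqref{eq: mkl general solution 2} solves the recursion is in principle routine but the $j$-summation makes it the most computation-heavy step; organizing it so that the binomial convolution collapses (via a generating-function or Vandermonde argument) is where most of the work lies.
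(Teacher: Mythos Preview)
Your proposal is correct and follows essentially the same stochastic-approximation strategy as the paper: derive the same five-term rate equation for $N_{kl}$, cast it in Robbins--Monro form with the marginal error $\rho_{l-1}-N_{l-1}(n)/n$ entering as an $o(n^{-1/2+\delta})$ remainder, observe that the drift Jacobian is lower triangular with diagonal entries no larger than $-1$, and identify the fixed point. The only cosmetic differences are that the paper orders states by $k$ (block bidiagonal) rather than by $k+l$, and \emph{derives} \eqref{eq: mkl general solution 2} by the substitution $m_{kl}=\frac{\Gamma(w-k+1)\Gamma(w-l+1)}{\Gamma(w/d-2+2w-k-l)}x_{kl}$ (which reduces the recursion to constant coefficients and solves by back-substitution) instead of verifying the given formula; your terminology ``strictly lower triangular'' should read ``lower triangular'' since the diagonal carries the out-flux term.
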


\begin{proof}
    The proof follows a similar approach as the proof of Proposition~\ref{thm: dist}. Details can be found in Appendix~\ref{app2}.
\end{proof}

The properties of the derived parent-child degree correlations are clearly visible in Fig~\ref{fig:corr}. 
If saturations are quick, they dominate the global parent-child correlations, which indicates the emergence of age-based communities. 
On the other hand, if saturations are sufficiently slow the correlations are dominated by the sublinear growth process. 
Clearly, these effects can naturally occur in any growth process with (local) saturations. 
What sets the sublinear growth process apart from, for example, bounded preferential attachment or uniform attachment, is that by actively avoiding local saturation---while still utilizing local capacities---mixtures of correlations arise in a \emph{wide} range of the sublinear exponents.

\subsection{Disentangling growth and saturation correlations}

The most interesting structures occur when the degree correlations are affected by both local saturations and growth. 
\textcolor{black}{These cases are best understood by considering the symmetric variable
 $E_{lk}(n)=N_{kl}(n)+N_{lk}(n)$, which captures the number of edges between vertices with degree $k$ and $l$~\cite{fotouhi2013degree}. }
 The convergence of $N_{kl}$ can then be interpreted by more commonly used structural properties~\cite{PhysRevE.81.011102,PhysRevLett.94.188701,PhysRevE.65.056109,PhysRevE.101.022304,WANG2018164,bogua2003epidemic}.
\begin{corollary}
The conditional degree distribution of $G_n^{(d,w)}$ converges to
     \begin{equation}\label{eq: con dist}
    \rho(l\mid k)=\frac{E_{lk}(n)}{kN_k(n)}\rightarrow\frac{\rho_{kl}+\rho_{lk}}{k\rho_k},
    \end{equation}
and the \emph{joint degree distribution} converges to
    \begin{equation}\label{eq: joint dist}
     \rho(l,k)=\frac{E_{lk}(n)}{dn}\rightarrow\frac{\rho_{kl}+\rho_{lk}}{d}.
    \end{equation}
\end{corollary}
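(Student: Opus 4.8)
\emph{Proof proposal.}\quad The plan is to obtain both convergences directly from the almost sure limits already established for $N_{kl}(n)/n$ and $N_k(n)/n$ in Propositions~\ref{thm: ref} and~\ref{thm: dist}, using nothing beyond the algebra of almost sure limits. First I would note that $E_{lk}(n)=N_{kl}(n)+N_{lk}(n)$ is a sum of two sequences each converging almost surely at rate $o(n^{-1/2+\delta})$ to $\rho_{kl}^{(d,w)}$ and $\rho_{lk}^{(d,w)}$ of \eqref{eq: mkl general solution 2} (with the boundary convention $\rho_{kd}^{(d,w)}=0$ inherited from $N_{kd}=0$). Hence $E_{lk}(n)/n\to\rho_{kl}^{(d,w)}+\rho_{lk}^{(d,w)}$ almost surely, at the same rate, and it remains only to divide by the appropriate normalizing sequence.

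For the joint degree distribution this is immediate, since $\rho(l,k)=E_{lk}(n)/(dn)=d^{-1}\bigl(E_{lk}(n)/n\bigr)$, so scaling gives the limit $(\rho_{kl}^{(d,w)}+\rho_{lk}^{(d,w)})/d$ with the $o(n^{-1/2+\delta})$ rate preserved. For the conditional degree distribution I would write $\rho(l\mid k)=E_{lk}(n)/(kN_k(n))=\bigl(E_{lk}(n)/n\bigr)/\bigl(k\,N_k(n)/n\bigr)$ and apply the quotient rule for limits to the two almost surely convergent sequences. The only subtlety is that the denominator must stay bounded away from $0$: by Proposition~\ref{thm: dist}, $N_k(n)/n\to\rho_k^{(d,w)}$, and one checks from \eqref{eq: degree dist} that $\rho_k^{(d,w)}>0$ for every $k\in\{d,\dots,w\}$, because $w>2d$ forces every Gamma-function argument appearing there to be positive. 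Thus, on the almost sure event where $N_k(n)/n\to\rho_k^{(d,w)}$, the ratio is well defined for all large $n$ and converges to $(\rho_{kl}^{(d,w)}+\rho_{lk}^{(d,w)})/(k\rho_k^{(d,w)})$, which is \eqref{eq: con dist}.

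Since the analytic content is entirely contained in the two cited propositions, the main obstacle is merely the bookkeeping that justifies the normalizations $kN_k(n)$ and $dn$ in \eqref{eq: con dist}--\eqref{eq: joint dist}. I would verify the identities $\sum_l E_{lk}(n)=kN_k(n)$ and $\sum_{k,l}E_{lk}(n)=2dn$ from the relation $\sum_l N_{kl}(n)=dN_k(n)$ stated in the text, together with the dual count that a vertex of degree $k$ has exactly $k-d$ children and hence $\sum_l N_{lk}(n)=(k-d)N_k(n)$; the diagonal case $k=l$ is the one instance where the standard factor-of-two edge-counting convention must be invoked, which I would state explicitly. As elsewhere in the paper, the $O(n^{-1})$-density contributions from multi-edges are absorbed into $w$ (cf.\ the Remark), so no new concentration estimate is required and the corollary follows.
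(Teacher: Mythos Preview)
Your proposal is correct and matches the paper's intended argument: the paper gives no separate proof of this corollary, treating it as an immediate consequence of the definition $E_{lk}(n)=N_{kl}(n)+N_{lk}(n)$ together with the almost sure limits from Propositions~\ref{thm: dist} and~\ref{thm: ref}. Your additional bookkeeping (positivity of $\rho_k^{(d,w)}$, the identities for $\sum_l E_{lk}(n)$, and the diagonal edge-counting convention) simply makes explicit what the paper leaves implicit.
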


\begin{figure*}
    \centering
    \includegraphics[width=\linewidth]{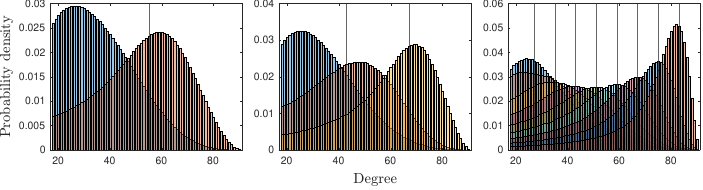}
    \caption{Mixtures of conditional degree distributions for two (left), three (middle), and nine (right) evenly split saturation levels of vertices in a network with $d=18$ and $w=90$. \textcolor{black}{The vertical lines correspond to two (left), three (middle) and eight(right) evenly spaced saturation levels categories.} } 
    \label{fig:mix dist}
\end{figure*}

The right panel of Fig.~\ref{fig:corr} shows how the Pearson correlation coefficient~\cite{PhysRevLett.89.208701,PhysRevE.64.041902} computed by the joint degree distribution decreases, but eventually settles, as the exponent of saturations decreases. 
Together with the \emph{mostly} increasing average nearest neighbor degree plot shown in the left panel this is indicative of the global assortativity of the sublinear networks. 

More importantly, we can now distinguish the structural properties of vertices based on how saturated they. 
The middle panel of Fig.~\ref{fig:corr} provides a simple but striking example that occurs at $w=3d$. 
The ``old'' saturated subgraphs make the conditional degree distribution for vertices with degree $w$ highly left skewed.
In contrast, vertices with degree $d$ have more variance in their connections as they are mostly linked to young vertices in the tail of the network. 
The combination and prevalence of the different degrees make it such that the global degree distribution is uniform.

\textcolor{black}{As in the time-windows mechanism these effects can be lumped into broader, often more practical, categories of vertex saturation levels by mixing the conditional degree distributions proportional to prevalence, see Fig.~\ref{fig:mix dist}}.
\textcolor{black}{The resulting distributions provide information on the likelihood of the degree of a vertex found by following a random edge from a random vertex within a given saturation category. 
The area that overlaps with other categories thus provide a measure for the likelihood of links between different saturation categories, capturing a degree based view of the time-windows and age-based communities shown in Fig.~\ref{fig:networks}.
For broad categories this view is coarse and the effect of full saturations on the global structure is significantly suppressed.
By increasing the number of categories, a more detailed view emerges that highlights the important and gradual structural impact of local saturations on the global structure of sublinearly growing networks.}

\section{Discussion and conclusion}\label{sec: con}
Motivated by a negative feedback mechanisms between local growth rates and growth capacity, we have proposed and analyzed a growing network process that incorporates a minimum of specifics: namely, the minimum and maximum degree, $d$ and $w$, 
\textcolor{black}{and a proportional negative feedback {\Et in new connections}}.
\textcolor{black}{The idealized process models the defining negative feedback of sublinear growth and its saturation by distributing edges among the vertices by how much they are missing, rather than how much they have.}
By avoiding local saturations, but still utilizing local capacities, a dynamic law of proportional effect occurs that quickly leads to power law scaling in the rate at which any local subset of vertices saturates with an exponent $d/(w-2d)$.
As a consequence of the negative linear relation between local capacity and size, 
an \emph{inverse} power law holds in the local volume of sets of vertices.
These inverse power laws also commonly occur in nature~\cite{jensen1998self,bak2013nature} which is indicative of the importance of sublinear growth and saturations. 
{\Et Our work provides mechanistic insights for these in the context of networks. }Our stochastic analysis on the fast convergence of expectations furthermore imply that individual finite networks cluster around the exponent, \textcolor{black}{which is in agreement with simulations.}. 
\textcolor{black}{Despite a remarkable diversity in degree distribution shapes, the process generates structurally fair networks with a low coefficient of variation across the wide range of the exponent. 
A practical implication is that a simple edge allocation process based on connection capacities can aid in maximizing fairness and improve the robustness of network systems, see e.g.~\cite{10.1093/comnet/cnv005} and the references therein.}

From a more theoretical perspective, another implication of the small {\Et range in }
the fairness index is that that for the most commonly observed sublinear exponents (strictly between zero and one), an approximate version of Taylor's law holds. 
This scaling behavior between the mean and variance was originally observed in the spatial distribution of animal populations and later in several other non-ecological systems including neurology, finance, and epidemiology~\cite{f4eb8472-e94e-3d96-8e2a-f2462ed4df6b,taylor1961aggregation,kendal2001stochastic,doi:10.1073/pnas.1505882112,Eisler01012008}.
The widely reported exponents in Taylor's law have also sparked several general explanations and a variety of models are known to (eventually) satisfy Taylor's law~\cite{doi:10.1073/pnas.1505882112,COHEN201430,Eisler01012008}.
Here, we have provided a more explicit link to sublinear growth processes and scale-free saturations, and showed Taylor's law and the exponent two emerges quickly enough for it to be relevant even for small networks.
Numerical simulations furthermore suggest that the approximate Taylor's law also emerges in the aging tails of a network
suggesting similar scaling effects occur at the boundaries of spatially expanding sublinear growth processes such as \textcolor{black}{the infrastructural commodities of growing cities~\cite{doi:10.1073/pnas.1913014117,doi:10.1073/pnas.2214254120}.} 
More analysis is required for a deeper formal understanding of this effect.

\textcolor{black}{In general, the exponent of saturation can be broadly interpreted as a \emph{dimensionless ratio} of resources used and produced during the course of a sublinear growth process. 
The importance is that the scales of $w$ and $d$ offer a naturally interpretable parametrization that can account for reported (species or interpersonal) variability in the slope of Taylor's law~\cite{schmiedek2009relation,brown2002fractal} whose source is {\Et otherwise} often not fully understood~\cite{kello2010scaling}.}
From this perspective our idealized process may be relevant for understanding core principles of much more complex sublinear scalings in other domains.
For example, Taylor's classic study measured the population abundance of a certain species for different habitat area sizes~\cite{Eisler01012008}. 
In habitats with a larger area, local saturations are less important for abundance because of the ``rescue effect''~\cite{keitt1998dynamics,brown1977turnover}. 
Thus, it is not unlikely that larger habitats have smaller exponents than similar smaller ones.
The sublinear growth process then predicts Taylor's law holds if local populations (usually called metapopulations) predominantly grow proportionally to resource availability with exponents in the linear regime of Fig.~\ref{fig:Taylor}.
This, however, does not account for possible decreases in abundance in response to quick local saturations, which would be an important consideration for population dynamics with higher exponents. 
In general, we observe global variability increases in the scale of resources, as illustrated by the higher variance curves of Fig.\ref{fig:Taylor} for larger values of $w$.
Interestingly, regardless of the $w/d$ scale, variability is maximized when the usage and addition of resources are closely aligned ($w\approx 3d$) and on average, vertices utilize two-thirds of their capacity for connections.
\textcolor{black}{Theoretically}, this peak in the connection variability \textcolor{black}{at the critical exponent $1$ is of interest because it shows that resource-efficient and structurally fair networks}, may have an increased dynamic range of pooled responses to stimuli that can vary in several orders of magnitude~\cite{kinouchi2006optimal}. 
\textcolor{black}{Indeed, in this regard the performance of scale-free networks in the superlinear regime have been shown to be lacking for $d>1$~\cite{copelli2007excitable}. 
The sublinear networks provide a simple idealized alternative to further understand critical phenomena on complex networks~\cite{RevModPhys.80.1275}. They offer 
a wide array of complementary properties that range from the highly organized almost $w$-regular growing networks to geometric single scale tails and everything in between.}

To further understand the structure of the sublinearly growing networks, we have shown that at a global level, the assortative nature of the growth process is enhanced by the local saturations in the structure that can arise in any growing network with degree bounds.
\textcolor{black}{As such, saturation processes and the three growth regimes may offer a natural explanation for complex correlation mixtures observed in biological, metabolic, and other type of networks~\cite{hao2011dichotomy,piraveenan2008local}.}
Disentangling the correlations then becomes important because it helps to understand changes in the structural function or role of vertices over time.
This is seen in the moving tails of a network, that are initially a source of new connection capacity, then turn into ``bridges'' between old and new vertices and eventually utilize most of their connection capacity to ensure a high local connectivity and robustness, see Fig.~\ref{fig:mix dist}. 
\textcolor{black}{For large exponents this process is so quick that age-based communities appear in which vertices of a similar age are tightly connected, but old vertices have almost no links to young vertices.}
In these extreme cases, the positive aspects of the maximally assortative~\cite{10.1093/comnet/cnv005}, fair, and almost regular networks come at the cost of an increasing diameter \textcolor{black}{induced by the narrow time windows.}
As the exponent of saturations decreases, however, the age-based communities disappear and more complex mixtures of correlations emerge and time windows broaden. 
\textcolor{black}{We have captured this by mixtures of conditional degree distributions that can be studied at a arbitrary level of detail.}
\textcolor{black}{It is of interest to further quantify how the exponents, and saturations in general, affect the ``natural'' width of time windows in a growing network, and other structural properties such as clustering coefficients that in are no longer homogeneous~\cite{PhysRevE.65.057102}. }
\textcolor{black}{
The observation that the strength of the age-based communities quickly decreases in the exponent of sublinearly growing networks is then significant because it indicates that there is a substantial \textit{diminishing return} in decreasing the diameter of a growing network by increasing its maximum degree $w$, see~Fig.~\ref{fig:networks}. 
In this way, energy or cost deficient buffers of non-utilized connection capacity or resources can be avoided.
For moderately large $w$ values, the surplus of connection capacity makes the diameter rather insensitive to the particular details of attachment \emph{unless} it grows at high capacities. 
This provides further theoretical support for the ubiquity of small diameters of \emph{growing} networks---even in the presence of a \emph{constant} maximum degree bound.} 

We finish the discussion with two rather important questions: \emph{why} would fair sublinear growth processes emerge; and, second, \emph{what} could one expect to see if a network is indeed generated by such a process?
One simple argument for the emergence of a fair and assortative growth process is homophily~\cite{facd5bfb-620b-3f71-8412-0c6d4aa71a2e}. 
The degree of a vertex is the simplest measure of its importance and in growing network the degree of a vertex is an increasing function of vertex age. 
Thus, homophily in these aspects could lead to a tendency of new vertices to connect to vertices with a similar degree. 
Of course, in human-made engineered networks this fair allocation may simply be by design~\cite{https://doi.org/10.1155/2014/612018,guo2024hardware}.  
We argue however, that a more important argument for sublinearly growing networks to emerge is that they \emph{delay} local saturations and prevent competition for resources or other negative effects that the local saturations may have on a global scale---while still utilizing available resources.
The latter is important because if the networks grow only by connection to new vertices, a lot of connection capacity is wasted.
\textcolor{black}{From a discrete choice perspective~\cite{holme2019rare,10.1145/3308558.3313662}, the sublinear growth process optimizes utilities that are logarithmic in the connection capacity of a vertex, which naturally leads to a vertical asymptote towards negative infinity at the maximum degree $w$ that enforces the strict degree bound. 
Within a more bilateral optimization setting~\cite{jackson2003strategic,stanley1971phase}, the evolution of social behavior through (strong) reciprocity may explain a preference to a fair mechanism~\cite{a5d507a6-ffc3-3033-b7e2-6e5ae70bb82b,WILD2023111469}.}
 
\textcolor{black}{
These more detailed considerations highlight the limitations of the sublinear growth process and its idealized structural properties.
A common criticism is that new vertices have to ``know" the connection capacity or degrees of all existing vertices. 
While a natural time window mechanism partially relieves this issue, it is more likely that a proportional feedback results from a more complex underlying process such as those discussed in~\cite{vazquez2003modeling,kaiser2017mechanisms}.
A clear limitation, however, is that of homogeneity. {\Et In our work,} 
$w$ and $d$ are equal for all vertices in the network, which causes strict cutoffs that are unlikely to be observed in real sublinear networks due to natural heterogeinity but also possible sampling biases~\cite{annibale2011you}.
For the latter, the conditional degree distributions like those in Fig.~\ref{fig:mix dist} may be more representative. 
However, inherent differences between vertices in real networks makes it likely that the two parameters $d$ and $w$ are subject to stochasticity. 
For example, they may be sampled from a possibly common and truncated distribution. 
Such considerations are known to produce better fits with empirical data of growth processes~\cite{PhysRevLett.80.1385,keitt1998dynamics} and would be an important consideration in a statistical study on the likelihood of sublinear growth in real networks.}

\textcolor{black}{To conclude, we emphasize that the growth process with scale-free saturations should not be seen as an alternative to other (idealized) processes such as preferential attachment and its many important variants, but rather as a complementary, yet closely related, scaling and growth regime motivated by different driving forces.
Its application lies predominantly in the ``body'' of networks with finite support that have received considerably less scientific attention, but nevertheless, capture essential features of large scale complex systems whose growth are constrained by local saturations.}

\begin{acknowledgments}
This work was partially funded by Wallenberg AI, Autonomous Systems and Software Program (WASP) funded by the Knut and Alice Wallenberg Foundation and the Swedish Research Council through Grant 2019-00691, the Swedish Research Council under the
grant 2021-06316, and by the Swedish Foundation for
Strategic Research.
The authors would like to thank S\'ergio Pequito and Fiona Skerman for their helpful suggestions and discussions.
\end{acknowledgments}

\section{Appendices}
\appendix
{\section{Proof of Proposition~1 and finite size bounds}\label{app: powerlaw}
Let $c_0$ denote the initial connection capacity. Using \eqref{eq: connection prob}, the expected changes in the connection capacity of $X\subset V_n$ after the addition of a new vertex to the network is 
\[\mathbb{E}(c_{n+1}(X)\mid G_n^{d,w})=c_n(X)-d\frac{c_n(X)}{c_n(V_n)},\]
with $c_n(V_n)=(w-2d)(n+1)+c_0$.
By extending this the expected connection capacity of $X$ after the addition of $j$ vertices to $G_n^{d,w}$ can be recursively written as
\begin{multline*}
    \mathbb{E}(c_{n+j}(X)\mid G_n^{d,w})=c_n(X)\prod_{i=0}^{j-1}\left(1-\frac{d}{c_{n+i}(V_n)}\right).
\end{multline*}
The product on the RHS determines the expected rate at which $c_n(X)$ decreases.
By growing the network from size $n$ to $\lambda n$, $j=(\lambda-1)n$ vertices are added and the expected rate becomes
\[\nu_X(\lambda,n)=\frac{\Gamma(n+\frac{c_0}{w-2d})\Gamma(\lambda n-\frac{d}{w-2d}+\frac{c_0}{w-2d})}{\Gamma(n-\frac{d}{w-2d}+\frac{c_0}{w-2d})\Gamma(\lambda n+\frac{c_0}{w-2d})}.\]
The well-known asymtotic relation $\Gamma(x+\alpha)\approx\Gamma(x)\alpha^x$ then implies that the limit of $\nu_X(\lambda,n)$ as $n\rightarrow\infty$ is independent of the constant initial $c_0\geq 0$ connection capacity and given by the power law $\lambda^{-\frac{d}{w-2d}}$}.

To get an understanding of the behavior for finite $n$, we apply Gautshi's inequality which states that for a positive real $x$ and $0<s<1$
\[x^{1-s}<\frac{\Gamma\left(x+1\right)}{\Gamma(x+s)}<(x+1)^{1-s}.\]
However, because our shift in negative and can be larger than one, we first have to do some manipulation.
For ease of exposition we assume $c_0=0$, but the below arguments can be used for any $c_0$ that is non-negative multiple of $(w-2d)$ by using a positive integer shift in $n$.
Now, observe that 
\[0<\frac{d}{w-2d}<d\Rightarrow f=\left\lceil{\frac{d}{w-2d}}\right\rceil\in\{1,\dots, d\}\]
Then, we can set $s$ to be the fractional part, 
\[s=\left\lceil{\frac{d}{w-2d}}\right\rceil-\frac{d}{w-2d},\]
and set $x$ from the relation
\[x+s=n-\frac{d}{w-2d}\Rightarrow x=n-\left\lceil{\frac{d}{w-2d}}\right\rceil.\]
Now for $n>f$, $x=n-f$ is a positive integer and so we can repeatedly shift the argument of $\Gamma(x)$ up to $\Gamma(n)$ to obtain
\begin{align*}
    \Gamma(n)&=\Gamma\left(x\right)\left(n-f\right)(n-f+1)\dots (n-1)
\end{align*}
And because $\Gamma(x+1)=x\Gamma(x)$, we also have
\[\Gamma(x+1)=\frac{\Gamma(n)}{\left(n-f+1\right)\dots (n-1)}=\frac{\Gamma(n)}{\left(n-f+1\right)^{{\overline{f-1}}}},\]
where $x^{\overline{n}}$ denotes the rising factorial.
And thus
\[\frac{\left(n-f+1\right)^{{\overline{f-1}}}}{(n-f)^{f-1-\frac{d}{w-2d}}}<\frac{\Gamma(n)}{\Gamma(n-\frac{d}{w-2d})}<\frac{\left(n-f+1\right)^{{\overline{f-1}}}}{(n-f+1)^{f-1-\frac{d}{w-2d}}}\]
Now, by setting $\tilde{x}=\lambda n$ and keeping $\tilde{s}=s$, Gautshi's inequality also implies
\[\frac{\left(\lambda n-f+1\right)^{{\overline{f-1}}}}{(\lambda n-f)^{f-1-\frac{d}{w-2d}}}<\frac{\Gamma(\lambda n)}{\Gamma(\lambda n-\frac{d}{w-2d})}<\frac{\left(\lambda n-f+1\right)^{{\overline{f-1}}}}{(\lambda n-f+1)^{f-1-\frac{d}{w-2d}}}.\]
By combining the bounds we obtain an overall bound on the rate
\[\frac{\left(\frac{\left(n-f+1\right)^{{\overline{f-1}}}}{(n-f)^{f-1-\frac{d}{w-2d}}}\right)}{\left(\frac{\left(\lambda n-f+1\right)^{{\overline{f-1}}}}{(\lambda n-f+1)^{f-1-\frac{d}{w-2d}}}\right)}<\nu_X(n,\lambda)<\frac{\left(\frac{\left(n-f+1\right)^{{\overline{f-1}}}}{(n-f+1)^{f-1-\frac{d}{w-2d}}}\right)}{\left(\frac{\left(\lambda n-f+1\right)^{{\overline{f-1}}}}{(\lambda n-f)^{f-1-\frac{d}{w-2d}}}\right)}\]
We can now readily observe that since $1\leq f\leq d$, for any fixed $w>2d>0$ independent of $n$, the limits of the upper and lower bounds as $n\rightarrow\infty$ coincide and evaluate as the power law $\lambda^{-d/(w-2d)}$.
By the squeeze theorem, this must also be the limit of the rate given by the fraction of gamma functions, which is consistent with the earlier derivation.
Any $c_0=c(w-2d)$ for constant $c$, just shifts $n$ to $n+c$, which implies the bounds become increasingly tight. For general $c_0>0$, the shifting property cannot be used, and other techniques need to be used to obtain bounds. 
We now consider the important case $w\geq 3d$ for which $\frac{d}{w-2d}\leq 1$ and thus $f=1$. 
This simplifies the bounds to
\begin{equation}
    \frac{(n-1)^{\frac{d}{w-2d}}}{(\lambda n)^{\frac{d}{w-2d}}}<\nu_X(n,\lambda)<\frac{n^{\frac{d}{w-2d}}}{(\lambda n-1)^{\frac{d}{w-2d}}}.
\end{equation}
Now, observe that the lower bound is increasing in $n$ and the upper bound is decreasing in $n$. Thus, the difference between the power law limit is smaller than the maximum absolute difference of the lower and upper bound indicating the fast convergence of the mean to the power law.

We can use the bounds on the expected rate to bound the probability that deviations occur from the expection. Let $Y_t$ denote the random variable reflecting the connection capacity of a set $X\in V_n$ $t\geq n$. Clearly, $Y_t$ is non-negative for all $t$ almost surely. Thus we can apply the conditional Markov's inequality:
\[\mathrm{Pr}(Y_{\lambda n}\geq a \mid Y_n)\leq \frac{\mathbb{E}(Y_{\lambda n}\mid Y_n)}{a}.\]
So that multiplicative deviations \emph{above} the mean with $\alpha\in(0,1)$ satisfy
\[\mathrm{Pr}(Y_{\lambda n}\geq \lambda^{-\frac{d\alpha}{w-2d}}Y_n \mid Y_n)\leq \frac{(\lambda-\frac{1}{n})^{-\frac{d}{w-2d}}}{ \lambda^{-\frac{d\alpha}{w-2d}}}.\]
Furthermore, because $Y_t$ is non-increasing we can also apply the variation of Markov's inequality that states
\[\mathrm{Pr}(Y_{\lambda n}\leq b \mid Y_n)\leq \frac{Y_n-\mathbb{E}(Y_{\lambda n}\mid Y_n)}{Y_n-b}.\]
So that multiplicative deviations \emph{below} the mean for $\beta>1$ satisfy
\begin{align*}
\mathrm{Pr}(Y_{\lambda n}\leq \lambda^{-\frac{-\beta d}{w-2d}} Y_n \mid Y_n)
\leq \frac{1-(\frac{n-1}{n})^{\frac{d}{w-2d}}\lambda^{-\frac{d}{w-2d}}}{1-\lambda^{-\frac{\beta d}{w-2d}}}.
\end{align*}
In the same the way linear shifts from the exponent can be bounded.

\section{Proof of Proposition~3}\label{app 1}
 The proof is based on an application of Proposition 3.1.1 in \cite{chen2005stochastic}. 
   We first derive the mean path that the random variable $N_k(n)$ would follow if in each step the actual change is equal to the expected change. 
   Using \eqref{eq: connection prob}, for $k\in\{d+1,w\}$ we have
\begin{align}\label{eq: mb k>d}
    &\mathbb{E}(N_k(n+1)-N_k(n)\mid G_n)\nonumber\\
    &=\frac{w-(k-1)}{(\frac{w}{d}-2)n+c_0}N_{k-1}(n)-\frac{w-k}{(\frac{w}{d}-2)n+c_0}N_k(n)\nonumber\\ 
    &=\frac{w-(k-1)}{(\frac{w}{d}-2)}\frac{N_{k-1}(n)}{n}-\frac{w-k}{(\frac{w}{d}-2)}\frac{N_k(n)}{n}+o(n^{-1}).
\end{align}
The positive (negative) factor on the RHS corresponds to the expected increase (decrease) due to new connections with vertices of degree $k-1$ ($k$). 
The term $o(n^{-1})$ comes from the vanishing influence of the initial connection capacity $c_0$. 
Similarly, for $k=d$ we have
\begin{align}\label{eq: mb k=d}
    \mathbb{E}(N_d(n+1)-&N_d(n)\mid G_n)=1-\frac{w-d}{(\frac{w}{d}-2)n+c_0}N_d(n)\nonumber\\
    &1-\frac{w-d}{(\frac{w}{d}-2)}\frac{N_d(n)}{n}+o(n^{-1}).
\end{align}
where the $1$ corresponds to the new vertex and the negative part corresponds to the expected decrease due to new connections with vertices of degree $d$. 
These equations are also known as the rate equations that are extensively used in the network science literature.
We continue to apply the stochastic approximation method to verify the random process does not stray too far from this mean path.
To this end, let $X_k(n):=N_k(n)/n$. Using ${N_k}\left(\frac{1}{n+1}-\frac{1}{n}\right)=-\frac{N_k}{n(n+1)}$ we have
\begin{multline*}
    \mathbb{E}(X_k(n+1)-X_k(n)\mid G_n)\\
    =\frac{1}{n+1}\left(\mathbb{E}(N_k(n+1)-N_k(n)\mid G_n)-X_k\right).
\end{multline*}
Define the functions 
\begin{equation}\label{eq: func roots}
\begin{split}
    f_d(x_d,\dots,x_w)&=1-\left(\frac{w-d}{\frac{w}{d}-2}+1\right)x_d\\
    f_k(x_d,\dots,x_w)&=\frac{w-(k-1)}{\frac{w}{d}-2}x_{k-1}-\left(\frac{w-k}{\frac{w}{d}-2}+1\right)x_k.
\end{split}
\end{equation}
We may then have that for all $k\in\{d,w\}$,
 \begin{multline*}
      \mathbb{E}(X_k(n+1)-X_k(n)\mid G_n)\\=\frac{1}{n+1}\left(f_k(X_d(n),\dots X_w(n))+{o(n^{-1})}\right).
 \end{multline*} 
Define the vector $Z(n):=\left(X_d(n)\dots X_w(n)\right)$ and stack the above functions into the vector $F(x_d,\dots,x_w):=(f_d(x_d,\dots,x_w),\dots, f_w(x_d,\dots,x_w))$. 
By linearity of expectation and the fact that $\mathbb{E}(Z(n)\mid G_n)=Z(n)$, 
we may write 
\[Z(n+1)-Z(n)=\frac{1}{n+1}\left(F(Z(n))+E_{n+1}+R_{n+1}\right).\]
where, as required, {$R_{n+1}=o(n^{-1})$ reflects the vanishing effect of initial vertex, and
$$E_{n+1}=(n+1)\left(Z(n+1)-\mathbb{E}(Z(n+1)\mid G_n)\right).$$ 
Proposition~3.1.1 of~\cite{chen2005stochastic} then tells us that, provided some conditions $A.3.1.1-A.3.1.4$ on $F$ and $E_n$ are met, $Z(n)$ will converge to the root of $F$ in Euclidean norm with rate $o(n^{-\delta})$ for some $\delta>0$.
Before we check the conditions we find the roots of $F$. 
From \eqref{eq: func roots} we find
\[\rho_d=\frac{1}{1+\frac{w-d}{\frac{w}{d}-2}}\]
is the unique root of the function $f_d$.
For $x_k$ with $k\in\{d+1,\dots,w\}$ the roots are given by the solution to the recursion
\[x_k=\left(\frac{w-(k-1)}{w-k+\frac{w}{d}-2}\right)x_{k-1}.\]
Using $x_d=\rho_d$, we find the root of $f_k$ is given by
\[\rho_k=\left(\frac{w}{d}-2\right)\frac{\Gamma(w-d+1)\Gamma(w-k+\frac{w}{d}-2)}{\Gamma(w-k+1)\Gamma(w-d+\frac{w}{d}-1)}.\]
We now proceed to check the conditions.
Condition A.3.1.1 is a commonly accepted requirement for the decreasing step size $\frac{1}{n}$ and are easy to verify.
First, it holds that the limit $n\rightarrow \infty$ is zero and its sum is infinite. 
We also have that
\[\frac{\frac{1}{n}-\frac{1}{n+1}}{\frac{1}{n(n+1)}}=1>0,\]
as desired.
Next since the functions $f_k$ are linear they are totally differentiable and \emph{locally} bounded. Moreover, since it is a function of the expectation of random variables it is also measurable. 
Conditions A3.1.2 and A3.1.4 pertain to the stability and boundedness of the mean path and can be verified using the derivative matrix (or Jacobian) $J$ of $F(x_d,\dots,x_w)$ that contains the partial derivatives $\frac{\partial f_k}{\partial x_j}$ for $k,j\in\{d,\dots,w\}$. 
Due to the structure of $F$, $J$ is lower bidiagonal with a strictly positive lower diagonal
\[\left[\frac{w-d}{\frac{w}{d}-2},\frac{w-d-1}{\frac{w}{d}-2},\dots,\frac{2}{\frac{w}{d}-2},\frac{1}{\frac{w}{d}-2}\right],\]
and \emph{strictly negative} diagonal
\begin{multline}
    \Bigg[-\left(\frac{w-d}{\frac{w}{d}-2}+1\right),-\left(\frac{w-d-1}{\frac{w}{d}-2}+1\right),\dots,\\-\left(\frac{1}{\frac{w}{d}-2}+1\right),-1\Bigg].
\end{multline}
Since the derivate matrix is triangular its eigenvalues
are equal to its diagonal elements. Since these are strictly negative, the derivative matrix is stable for all $w>2d$.
Moreover, since the largest eigenvalue is equal to $-1$, 
the matrix $J+\delta I$ is also stable for all $\delta\in[0,1)$.
This implies both $A3.1.2$ and $A3.1.4$ hold.
We now look at the condition $A.3.1.3$ on the ``noise term'' $E_{n}$. First note that the expectation of the noise term $\mathbb{E}(E_{n+1}|G_n)$ evaluates as
\begin{align*}
    (n+1)\left(\mathbb{E}(Z(n+1)\mid G_n)-\mathbb{E}(Z(n+1)\mid G_n)\right)=0.
\end{align*}
Hence, $E_n$ is a martingale difference sequence with respect to $G_n$.
Moreover, since
\begin{align*}
    ||E_{n+1}||&=(n+1)\left(||Z(n+1)-\mathbb{E}(Z(n+1)\mid G_n)||\right)\\
    &\leq 4(d+1)(w-d+1).
\end{align*}
it is satisfied that
\[\sup_n\mathbb{E}\left(||E_{n+1}||^2\mid G_n\right)<\infty. \]
Then, by the convergence Proposition for martingale difference sequences we have
\[\sum_{n=1}^\infty\frac{1}{n^{(1-\delta)}}E_{n+1}<\infty,\]
almost surely for all $\delta<\frac{1}{2}$. This implies A3.2.2 holds almost surely for all $\delta<\frac{1}{2}$ and completes the proof.

\section{Proof of Proposition~5}\label{app2}

The proof follows the same approach as the proof of Proposition~\ref{thm: dist}. 
That is, we start by deriving the difference equation for the expected change of $N_{kl}$. 
For simplicity we suppress the vanishing effect of the initial vertex that can be handled in the same manner as in the proof of Proposition~\ref{thm: dist} and can be combined with other noise terms that arise from the degree distribution.
We proceed to find the stochastic approximations process for $N_{kl}/n$ and verify the conditions $A.3.1.1-A.3.1.4$ of Proposition 3.1.1 in \cite{chen2005stochastic}. 
The main differences are due to additional noise terms that arise from the degree distribution, and the slightly more complex mean behavior of $N_{kl}$ compared to $N_{k}$. 
We set $N_{0l}=0$. 
Then the expected change in $N_{kl}(n)$ follows the difference equation
\begin{multline}\label{eq: re as}
    \mathbb{E}(N_{kl}(n+1)-N_{kl}(n)\mid G_n)=\frac{w-(k-1)}{(\frac{w}{d}-2)n}N_{k-1l}(n)\\-\frac{w-k}{(\frac{w}{d}-2)n}N_{k,l}(n)+\frac{w-(l-1)}{(\frac{w}{d}-2)n}N_{kl-1}(n)\\-\frac{w-l}{(\frac{w}{d}-2)n}N_{k,l}(n)+{{\frac{w-(l-1)}{(\frac{w}{d}-2)n}N_{l-1}(n)\delta_{kd}}},
\end{multline}
where $\delta_{kd}=1$ if $k=d$ and zero otherwise.
The first (second) term corresponds to the increase (decrease) of an additional connection of a vertex with degree $k-1$ ($k$) already connected to an earlier vertex with degree $l$. The third and fourth terms are due to the same process with $k$ and $l$ switched. The final term is an increase due to the new vertex with degree $d$.

By Proposition~1 we know that $N_{k}/n\rightarrow\rho_k$ with rate $o(n^{-\delta})$ for all $\delta<\frac{1}{2}$ almost surely. 
We will use this to bound the influence of the noise term in \eqref{eq: re as} when verifying condition $A.3.1.3$.
As before, we may write
\begin{multline}\label{eq: expt}
\mathbb{E}(X_{kl}(n+1)-X_{kl}(n)\mid G_n)\\=\frac{1}{n+1}\left(\mathbb{E}(N_{kl}(n+1)-N_{kl}(n)\mid G_n)-X_{kl}(n).\right)
\end{multline}
Define the column vector of placeholder variables $v_k=(x_{kd+1},\dots,x_{kw})$. Let $x_l=\rho_{l}$.
Note that $v_k$ contains $w-d$ elements.
Next, stack the vectors $v_k$ into a single vector $x=(v_d,v_{d+1},\dots,v_w)$, now containing $(w-d+1)(w-d)$ elements.
We proceed to define functions that describe the ``mean path'' analogous to the functions $f_k$ in the proof of Proposition~\ref{thm: dist}.
For $k\in \{d+1,\dots,w\}$ and $l\in\{d+1,\dots,w\}$ let
\begin{multline}\label{eq: system1}
    h_{kl}(x)=\frac{w-(k-1)}{\frac{w}{d}-2}x_{k-1l}+\frac{w-(l-1)}{\frac{w}{d}-2}x_{kl-1}\\-\left(\frac{2w-k-l}{\frac{w}{d}-2}+1\right)x_{kl}+{\frac{w-(l-1)}{\frac{w}{d}-2}x_l\delta_{kd}}.
\end{multline}
For $k\in\{d+1,\dots,w\}$ and  $l=d+1$
\begin{multline}
    h_{kl}(x)=\frac{w-(k-1)}{\frac{w}{d}-2}x_{k-1,d+1}\\-\left(\frac{2w-k-(d+1)}{\frac{w}{d}-2}+1\right)x_{kd+1}.
\end{multline}
For $k=d$ and  $l\in\{d+2,\dots,w\}$
\begin{multline}
    h_{dl}(x)=\frac{w-(l-1)}{\frac{w}{d}-2}x_{d,l-1}+{\frac{w-(l-1)}{\frac{w}{d}-2}\rho_{l-1}}\\-\left(\frac{2w-d-l}{\frac{w}{d}-2}+1\right)x_{dl}.
\end{multline}
Finally for $k=d$, $l=d+1$
\begin{equation}
    h_{dd+1}(x)={\frac{w-d}{\frac{w}{d}-2}\rho_{d}}-\left(\frac{2w-2d-1}{\frac{w}{d}-2}+1\right)x_{dd+1}.
\end{equation}

Let $W(n)$ be the vector of random variables $X_{kl}(n)$ with the same labeling as placeholder $x$.
Using \eqref{eq: re as}, \eqref{eq: expt}, and the above equations we have
\begin{multline*}
    \mathbb{E}(X_{kl}(n+1)-X_{kl}(n)\mid G_n)\\
    =\frac{1}{n+1}\left(h_{kl}(W(n))+\{\frac{w-(l-1)}{\frac{w}{d}-2}(\rho_{l-1}-X_{l-1}(n))\delta_{kd}\right).
\end{multline*}
We already know that for $w>2d$, for all $l=d+1,\dots,w$, the error term $\rho_{l-1}-X_{l-1}(n)$ goes to zero with rate $o(n^{-\delta})$ for all $\delta<1/2$ almost surely as desired for condition $A.3.1.3$, which can also capture the vanishing effect of the initial connection capacity that drops in $o(n^{-1+\alpha})$ for all $\alpha>0$}.
Now, we stack the $(w-d+1)(w-d)$ functions $h_{kl}$ into a single vector $H$, so that we may write
\begin{equation}
   \mathbb{E}(W(n+1)-W(n)\mid G_n)=\frac{1}{n+1}\left(H(W(n))+R_{n+1}\right),
\end{equation}
{where $R_{n+1}$ is a $(w-d+1)(w-d)$ vector that is zero everywhere expect for the first $w-d$ terms that are
equal to the error terms from the degree distribution.} 
Again, by linearity of expectation we have
\[W(n+1)-W(n)=\frac{1}{n+1}\left(H(W(n))+\bar{E}_{n+1}+R_{n+1}.\right)\]
where 
\[\bar{E}_{n+1}=(n+1)\left(W(n+1)-\mathbb{E}(W(n+1)\mid G_n)\right).\]
Notice that $\bar{E}_{n+1}$ is a martingale difference sequence with respect to $G_n$ since its expected value given $G_n$ is zero.
Moreover, as $E_n$ before, the changes in $\bar{E}_n$ are bounded as required.
We now check the derivative matrix of $H$ that has the block matrix structure
\begin{equation*}
A=\left[ 
  \begin{array}{c c c c c c} 
     D_{d} & \mathbf{0} & \mathbf{0}&\dots &\mathbf{0}\\ 
     C_{d+1} & D_{d+1} & \mathbf{0} &\hdots&\mathbf{0}\\ 
     \mathbf{0} & C_{d+2} & D_{d+2}&\hdots &\mathbf{0}\\
     \vdots & \ddots& \ddots&\ddots&\vdots\\
     \mathbf{0}&\hdots & \mathbf{0}& C_{w}&D_{w}
  \end{array} 
\right] 
\end{equation*}
Each block $D_k$ is a $(w-d)$ square matrix. The subscript corresponds to the degree of a vertex, with $k$ in $\{d,\dots,w\}$. 
Each $D_k$ is a bidiagonal matrix with a strictly negative diagonal given by
\[-\left(\frac{2w-k-l}{\frac{w}{d}-2}+1\right), \quad l=d+1,\dots,w.\]
The lower diagonal terms of $D_k$ are given by
\[\frac{w-l+1}{\frac{w}{d}-2},\quad l=d+2,\dots,w.\]
The $C_k$ blocks on the lower diagonal are scaled identity matrices given by
\[C_k=\left(\frac{w-k+1}{\frac{w}{d}-2}\right)\mathbf{I}_{w-d}.\]
The eigenvalues of $A$ are given by the eigenvalues of the diagonal blocks, that by themselves are bidiagonal with a strictly negative diagonal. Consequently, the eigenvalues of $A$ are also strictly negative with the largest eigenvalues equal to $-1$.

Finally, we show the systems of equations in $H$ has a unique root which can be found by solving the recursion
\begin{multline}\label{eq: rec}
    x_{kl}=\left(\frac{w-k+1}{\frac{w}{d}-2+w-k+w-l}\right)x_{k-1l}\\+\left(\frac{w-l+1}{\frac{w}{d}-2+w-k+w-l}\right)x_{kl-1}\\
    +\frac{w-l+1}{\frac{w}{d}-2+w-k+w-l}\rho_{l-1}\delta_{kd}.
\end{multline}
with the boundary conditions
\begin{equation}\label{eq: bc}
\begin{split}
    x_{kd}&=0 \quad \forall k\in\{d,\dots, w\};\\
    x_{k-1d}&=0 \quad k=d;\\
    x_{kl-1}&=0 \quad l\in\{d,d+1\};\\
    \rho_{l-1}&=0 \quad l=d.
    \end{split}
\end{equation}
Note that $\rho(l-1)$ is the probability mass at $l-1$ of the degree distribution given by Proposition~\ref{thm: dist}.
By multiplying the recurrence \eqref{eq: rec} on both sides with
\[f(k,l)=\frac{\Gamma(w-k+1)\Gamma(w-l+1)}{\Gamma(\frac{w}{d}-2+w-k+w-l)},\]
we simplify it to the recurrence relation with constant coefficients $m_{kl}=f(k,l)x_{kl}$ given by 
\[m_{kl}=m_{k-1l}+m_{kl-1}+\eta c(l)\delta_{kd},\]
{where $\eta$ and $c(l)$ are}
\begin{itemize}
    \item $c(l)=\frac{\Gamma(\frac{w}{d}-1+w-l)}{\Gamma(\frac{w}{d}-1+w-d+w-l)}$,
    \item $\eta=\left({\frac{w}{d}-2}\right)\frac{\Gamma(w-d+1)^2}{\Gamma(w-d+\frac{w}{d}-1)}$.
\end{itemize}
This recursion can be solved with back substitution using the boundary conditions~\eqref{eq: bc} to find the unique solution
\begin{align}
    m_{kl}
    =\eta\sum_{j=1}^{l-d}\binom{l-d+k-d-j}{k-d}c(d+j).\label{eq: mkl general solution}
\end{align}
The solution to the original recurrence relation~\eqref{eq: rec} in the statement is then found by reverting the transformation $x_{kl}=m_{kl}/f(k,l)$.

\bibliography{bib}

\end{document}